\newcommand{\PP}{\ensuremath{\mathbb{P}}}
\DeclareMathOperator{\rk}{rk}
\DeclareMathOperator{\Hess}{Hess}
\newtheorem{fact}[defn]{Fact}
\newtheorem{obs}[defn]{Observation}
\DeclareMathOperator{\poly}{poly}
\DeclareMathOperator{\Poly}{Poly}
\DeclareMathOperator{\Coeff}{Coeff}
\DeclareMathOperator{\GL}{GL}
\DeclareMathOperator{\AGL}{AGL}
\newcommand{\var}[1]{\mathbf{#1}}
\renewcommand{\det}{\mathop{\mathrm{det}}\nolimits}
\begin{document}

\title{Unifying and Generalizing Known Lower Bounds \\
via Geometric Complexity Theory
}

\author{Joshua A. Grochow}
\thanks{Department of Computer Science, The University of Toronto. \tt{jgrochow@cs.toronto.edu}}

\begin{abstract}
We show that most arithmetic circuit lower bounds and relations between lower bounds naturally fit into the representation-theoretic framework suggested by geometric complexity theory (GCT), including: the partial derivatives technique (Nisan--Wigderson), the results of Razborov and Smolensky on $\cc{AC}^0[p]$, multilinear formula and circuit size lower bounds (Raz \etal), the degree bound (Strassen, Baur--Strassen), the connected components technique (Ben-Or), depth 3 arithmetic circuit lower bounds over finite fields (Grigoriev--Karpinski), lower bounds on permanent versus determinant (Mignon--Ressayre, Landsberg--Manivel--Ressayre), lower bounds on matrix multiplication (B\"{u}rgisser--Ikenmeyer) (these last two were already known to fit into GCT), the chasms at depth 3 and 4 (Gupta--Kayal--Kamath--Saptharishi; Agrawal--Vinay; Koiran), matrix rigidity (Valiant) and others. That is, the original proofs, with what is often just a little extra work, already provide representation-theoretic obstructions in the sense of GCT for their respective lower bounds. This enables us to expose a new viewpoint on GCT, whereby it is a natural unification and broad generalization of known results. It also shows that the framework of GCT is at least as powerful as known methods, and gives many new proofs-of-concept that GCT can indeed provide significant asymptotic lower bounds. This new viewpoint also opens up the possibility of fruitful two-way interactions between previous results and the new methods of GCT; we provide several concrete suggestions of such interactions. For example, the representation-theoretic viewpoint of GCT naturally provides new properties to consider in the search for new lower bounds.
\end{abstract}

\maketitle

\pagestyle{myheadings}
\markboth{Unifying and Generalizing Known Lower Bounds via GCT - Joshua A. Grochow}{Unifying and Generalizing Known Lower Bounds via GCT - Joshua A. Grochow}

\section{Introduction}
Geometric complexity theory (GCT) is a program towards lower bounds---such as $\cc{P} \neq \cc{NP}$---using algebraic geometry and representation theory (see \cite{gctJACM} for an overview, and references therein). In this paper, we show that most arithmetic circuit lower bounds naturally fit into the representation-theoretic framework used in GCT. We also show that part of the representation-theoretic approach is necessary, that this approach illuminates lower bounds even when it is not strictly necessary, and that it may in fact be the easiest approach to proving circuit lower bounds. GCT thus provides a unifying and generalizing framework for many known lower bounds. This representation-theoretic viewpoint opens the door for new potentially fruitful two-way interactions between previous results and new progress in (geometric) complexity theory (see Sections~\ref{sec:implications} and \ref{sec:old} for details).

This paper presupposes no knowledge of representation theory on the part of the reader. In fact, we use previous lower bounds together with our new viewpoint to motivate the use and definitions of representation theory and algebraic geometry in complexity theory.

Essentially any lower bound proof $\mathcal{C}_{hard} \not\subseteq \mathcal{C}_{easy}$ between nonuniform complexity classes proceeds by finding some ``useful'' property, which applies to every function in $\mathcal{C}_{easy}$, but not to every function in $\mathcal{C}_{hard}$. The first part of the GCT program suggests the use of properties of a certain type, namely (linear-)invariant properties defined by the vanishing of polynomials, which we capture in the notion of ``separating module'' (Definition~\ref{def:sep}). Recall that a property $\Pi$ is linear-invariant if for every function on $n$ variables, $f(\var{x})$ has $\Pi$ if and only if $f(A\var{x})$ has $\Pi$ for every invertible $n \times n$ change of variables $A$. In this paper we show that most known arithmetic circuit lower bounds in fact use separating modules, including:
\begin{itemize}
\item Lower bounds on restricted depth 3 arithmetic circuits in
characteristic zero \cite{nisanWigderson}

\item Lower bounds on (unrestricted) depth 3 arithmetic circuits over
finite fields \cite{GK}

\item The recent lower bounds on depth 4 arithmetic circuits with
bottom fan-in $O(\sqrt{n})$ \cite{GKKS}

\item Lower bounds on multilinear formula size \cite{razPerm}

\item The degree bound of Strassen \cite{strassenDegree} and Baur--Strassen \cite{baurStrassen} (see below)

\item Lower bounds on real (semi-)algebraic decision trees \cite{benOr,yao97}

\item Lower bounds on bounded depth Boolean circuits \cite{razborov,smolensky}

\item The best known lower bounds ($n^2/2$) on permanent versus
determinant \cite{mignonRessayre} (already shown to use a separating module \cite{LMR})

\item Many lower bounds on matrix multiplication (already shown to use a separating module \cite{BI, strassenMM})
\end{itemize}
We expect that results which use similar techniques can be shown to use separating modules as well, such as \cite{razFormula,RSY,razYehudayoffMultilinear,shpilkaWigderson,GR,yao91,BLY}.
We also observe that many relations between lower bounds yield relations between separating modules. In other words, if lower bound A is proved using a separating module, that yields a separating module for lower bound B:
\begin{itemize}
\item Lower bounds on partial derivatives implies lower bounds \cite{baurStrassen}

\item Matrix rigidity implies circuit lower bounds \cite{valiantRigidity}

\item The chasm at depth 4 \cite{AV, koiranChasm} and the recent chasm
at depth 3 \cite{GKKSchasm}

\item Tensor-rank lower bounds imply formula size lower bounds \cite{razTensor}
\end{itemize}

Finally, in Section~\ref{sec:necessary} we argue that the use of invariant properties is essentially necessary, and we give heuristic arguments that the use of separating modules is by far the easiest way to prove arithmetic circuit lower bounds. Thus separating modules are the first approach to try, and indeed may be the only approach that is easy enough that it will ever be carried out. We can already give one such heuristic argument: most arithmetic circuit lower bounds already use separating modules.

This new viewpoint makes new tools available, and suggests new conjectures and directions to better understand complexity classes and lower bounds. However, we do not provide new proofs of any of the above results. Our paper is similar in some ways to Natural Proofs \cite{razborovRudich} or Razborov's papers \cite{razborovBA1, razborovBA2} on bounded arithmetic, in that we offer a meta-observation about many lower bounds. This involves digging into the details of the proofs of known lower bounds to understand them in a particular way, which is sometimes trivial but sometimes requires new insights. These previous meta-results have shown that a new viewpoint can be quite fruitful; for example, by working in the framework of bounded arithmetic, Razborov was able to come up with a beautiful new proof of the Switching Lemma \cite{razborovBA1}. Despite this new proof of a lower bound against $\cc{AC}^0$, the fundamental message of the papers \cite{razborovRudich, razborovBA1, razborovBA2} was negative, giving barriers to proving strong lower bounds, whereas the message of this paper is \emph{positive}, suggesting a \emph{route to proving lower bounds}---a route that most arithmetic circuit lower bounds have already begun to traverse.

In Section~\ref{sec:implications}, we discuss some of the implications of this work. We postpone further details of the implications until Section~\ref{sec:discussion}, as they are difficult to discuss properly without definitions and a full example in mind. We give the definitions and an example of how a previous lower bound fits into this new viewpoint in Section~\ref{sec:ex0}. In Section~\ref{sec:necessary} we argue for the necessity of invariant properties and the feasibility and utility of separating modules, especially in comparison with other possible approaches. Section~\ref{sec:discussion} contains further discussion and implications. In particular, we discuss the relation of this viewpoint to the larger GCT program---in particular, separating modules are only the very beginning of the GCT approach. We also discuss lower bounds which don't seem to fit into this framework---mostly those based on uniform hierarchy theorems---and we suggest some concrete directions for future research to push forward both our understanding of GCT, and our understanding of known lower bounds and the complexity classes they consider. We also discuss in what way Boolean lower bounds fit into this framework. In Sections~\ref{sec:lowerbounds} and \ref{sec:relations} we prove that the results mentioned above use separating modules. However, if the reader is willing to take the above lists on faith, the significance of this paper can be understood without reading these last two sections in detail.

\subsection{Implications} \label{sec:implications}
Our unifying viewpoint suggests the possibility of a fruitful two-way interplay between the methods currently being leveraged in GCT against major open problems like permanent versus determinant and $\cc{P}$ versus $\cc{NP}$, and already hard-won knowledge for lower bounds on more tractable problems. Although we can state some of these possible interactions now, they will become clearer after the example in the next section, and we discuss further implications in Section~\ref{sec:discussion}.

First, the representation-theoretic viewpoint suggests where to look for new properties that might yield lower bounds. Even for lower bounds that are already essentially tight, the representation theory suggests how we might get new proofs of these lower bounds or otherwise understand them better. 

Second, the representation-theoretic viewpoint suggests new conjectures, directions, and techniques that may prove fruitful; see, for example, the last paragraph of Section~\ref{sec:gct} and the open questions in Sections~\ref{sec:old} and \ref{sec:explicit}.

Third, by showing that previous lower bounds and GCT share a common representation-theoretic viewpoint, we reveal many new contexts in which it might hopefully be easier to develop the tools and techniques of algebraic geometry and representation theory needed for the GCT approach to bigger problems such as permanent versus determinant or $\cc{P}$ versus $\cc{NP}$.

Fourth, it is often asked how difficult it is to re-prove known lower bounds using GCT. The viewpoint in this paper reveals that most of the old proofs \emph{already} give representation-theoretic knowledge crucial to the GCT approach, in the form of separating modules. There is, however, a difference between separating modules and the geometric obstructions defined in \cite{gct2}. Upgrading the previous lower bounds to yield such geometric obstructions is one of the open questions we discuss in more detail in Section~\ref{sec:gct}. This is one of the ways in which GCT suggests how we might understand previous lower bounds better, even ones that are essentially tight.

For now, we mention just one more point: the representation-theoretic viewpoint replaces the amorphous notion of ``useful property'' with the specific mathematical notion of separating module. In Section~\ref{sec:necessary} we argue that this is in some sense without loss of generality. This reduces an amorphous search for new useful properties to a comparatively feasible search for separating modules, which can even be made computational (see Appendix~\ref{app:necessary} and Section~\ref{sec:discussion} for more).

\section{Definitions and a motivating example} \label{sec:ex0}
Most nonuniform lower bounds $\mathcal{C}_{hard} \not\subseteq \mathcal{C}_{easy}$ are proved by finding a property shared by all functions in the ``easy'' class $\mathcal{C}_{easy}$ that some function $f \in \mathcal{C}_{hard}$ does not have. The goal of this section is to introduce a representation-theoretic formalization of the types of properties used by most arithmetic circuit lower bounds, namely (linear-)invariant properties defined by polynomials.

\subsection{Properties defined by polynomials}
Throughout the definitions and motivation, we will use the example of the space $\Poly^2(x,y) = \{a x^2 + b x y + c y^2 | a,b,c \in \F\}$ of degree 2 homogeneous polynomials in two variables\footnote{The notation $\Poly^d(x_1, \dotsc, x_n)$ is not standard. We use it because it is clear and mnemonic. For reference we give the standard notation from the literature in Appendix~\ref{app:notation}.}  over some field\footnote{In some of these examples, it may be necessary to restrict the characteristic of the field. In all of our actual results we specify the field more carefully.} $\F$, and the expression $b^2 - 4ac$. The space $\Poly^2(x,y)$ in this running example should be thought of as analogous to the space of polynomials we care about, like the determinant, permanent, etc. (which are points in $\Poly^n(x_{11},x_{12},\dotsc,x_{nn})$), but is small enough that we can carry out computations completely by hand and the definitions in this context should already be familiar to the reader. 

Recall that $b^2-4ac=0$ if and only if $ax^2 + b x y + c y^2$ is a perfect square\footnote{Equivalently and perhaps more familiar is that $b^2-4ac = 0$ if and only if $a x^2 + bx + c$ has a double root.} $(\alpha x + \beta y)^2$ for some constants $\alpha,\beta \in \F$. We thus view $b^2 - 4ac \iseq 0$ as a test for the property of being a perfect square, and we say that this \definedWord{property is defined by the (vanishing of the) polynomial} $b^2-4ac$. 

Note that here we consider $b^2-4ac$ not just as an expression, but as a polynomial in the \emph{variables} $a,b,c$, which are the coefficients of the polynomials $ax^2 + b xy + cy^2$. Because there are two different spaces of polynomials here, we find it useful to give different names to them. We refer to polynomials such as $ax^2 + bxy + cy^2 \in \Poly^2(x,y)$ with $a,b,c$ constants as \definedWord{input polynomials}: these are polynomials in the ``input variables'' $x,y$, and are also themselves inputs for the property tests. We refer to polynomials such as $b^2-4ac$ as \definedWord{test polynomials}: these are polynomials whose variables are the \emph{coefficients} of the input polynomials, and define a test for some property of input polynomials.

We index monomials by their exponent vectors $e \in \Z_{\geq 0}^{n}$ and write $\var{x}^e \defeq x_1^{e_1} \dotsc x_n^{e_n}$; we denote the corresponding coefficient by $a_e$, and then write any polynomial as $f(\var{x}) = \sum_{e \in \Z_{\geq 0}^n} a_e \var{x}^e$ (only finitely many terms will be nonzero). If $p \in \C[(a_e)_{e \in \Z_{\geq 0}^n}]$ is a test polynomial and $f = \sum_{e} \alpha_e \var{x}^e$ is an input polynomial, we write $p(f)$ for the evaluation of $p$ in which each test variable $a_e$ is set to the corresponding coefficient $\alpha_e \in \F$ of $f$.

\begin{defn} \label{def:poly}
A property $\Pi$ of input polynomials is \definedWord{defined by (test) polynomials} if there is a set of test polynomials $p_1, \dotsc, p_k$ such that $f(\var{x})$ has property $\Pi$ if and only if $p_1(f) = p_2(f) = \dotsc = p_k(f) = 0$.
\end{defn}

\begin{rmk}
Readers familiar with algebraic geometry will note that a property defined by test polynomials is exactly the same thing as an algebraic subset of the vector space $\Poly^d(x_1, \dotsc, x_n)$. This is an algebro-geometric viewpoint on complexity. We discuss this further in Section~\ref{sec:necessary}. For now we note that such algebro-geometric notions of complexity have been used before: border rank for matrix multiplication and ``infinitesimal approximation'' in GCT are both algebro-geometric notions of complexity in this sense.
\end{rmk}

\begin{rmk}
By Hilbert's Basis Theorem, any property defined by polynomials can be defined by finitely many polynomials.
\end{rmk}

\subsection{Linear-invariant properties defined by polynomials}
Kayal \cite[Sec.~5.2]{kayal} observes that several lower bounds use linear-invariant properties at their core, and in fact this observation was the starting point for this paper. In this paper we extend this observation in two directions simultaneously: (1) we observe that most arithmetic circuit lower bounds use (linear-)invariant properties \emph{defined by polynomials} (Definition~\ref{def:poly}), allowing us to make the connection with representation theory and GCT, and (2) we extend the observation to most arithmetic circuit lower bounds.

\begin{defn}
A property $\Pi$ of (input) polynomials is \definedWord{linear-invariant} if for every polynomial $f(x_1, \dotsc, x_n)$ and every invertible linear change of variables $A \in \GL_n(\F)$
\[
f(\var{x}) \text{ has property $\Pi$} \iff f(A\var{x}) \text{ has property $\Pi$}
\].
\end{defn}

\begin{example}
The property of being a perfect square is linear-invariant: $f(\var{x}) = g(\var{x})^2$ if and only if $f(A\var{x}) = g(A\var{x})^2$ for any invertible linear change of variables $A$. As explained in the previous section, in the case of $f(x,y)$ homogeneous of degree 2, this property is defined by the vanishing of the test polynomial $b^2 - 4ac$.
\end{example}

\begin{example} \label{ex:pder}
The dimension of the space of all partial derivatives of a homogeneous polynomial $f$ is a linear-invariant property. The $k$-th order partial derivatives of $f$ are linearly independent from its $\ell$-th order partial derivatives for $k \neq \ell$, so we may prove this for each $k$ separately. Consider the partial derivative $\left(\frac{\partial f}{\partial x_{i_1} \dotsb \partial x_{i_k}}\right)(\var{x})$. When we transform the variables $\var{x}$ by $A$, we change both the variables with respect to which the derivatives are being taken, and we change the variables at which the partial derivative is being evaluated. The fact that the former kind of transformation does not change the dimension of the space of partial derivatives follows from the usual ``directional derivative'' formula from multilinear calculus. The latter kind of transformation also does not change the dimension of a space of polynomials, for $\sum_{i=1}^{d} \alpha_i g_i(\var{x}) = 0$ if and only if $\sum_{i=1}^d \alpha_i g_i(A\var{x}) = 0$. We will see below that this property is also defined by polynomials.
\end{example}

The notion of a linear-invariant property defined by polynomials is embodied in the following definition. To make the definition clear we first introduce one more bit of notation. Each linear change of input variables $B \in \GL_n(\F)$ defines a linear map $\Coeff_{B}$ from $\Poly^d(x_1, \dotsc, x_n)$ to itself: $B$ sends $f(\var{x}) = \sum_e a_e \var{x}^e$ to $f(B\var{x}) = \sum_e a'_e \var{x}^e$. In other words, $\Coeff_{B}$ is the linear map taking the coefficient vector $(a_e)_{e \in \Z_{\geq 0}^n}$ to the new coefficient vector $\Coeff_{B}((a_e)_e) = (a'_e)_e$. It is a standard fact---easily verified---that $\Coeff_{B}$ is linear\footnote{Linear in the coefficients $a_e$. It will have degree $d$ in the coordinates of $B$, but that is not relevant here.}. Thus $B$ induces a linear map $\Coeff_{B}$ on the coefficients of input polynomials, which are in turn the variables of test polynomials. Then $\Coeff_{B}$ induces a linear map on test polynomials, taking $p((a_e)_e)$ to $p(\Coeff_{B}((a_e)_e))$.

\begin{defn}
A \definedWord{test $\GL_n(\F)$-module}\footnote{See Appendix~\ref{app:terminology} for a discussion of the terminology.} is a finite-dimensional vector space $T$ of test polynomials, say with basis $\{p_1, \dotsc, p_k\}$, such that for each $1 \leq i \leq k$ and each $B \in \GL_n(\F)$, $p_i(\Coeff_{B}((a_e)_e))$ lies in $T$.
\end{defn}

We say a test module $T$ vanishes on an input polynomial $f$ if every test polynomial $p \in T$ vanishes at $f$. The set of input polynomials at which a given test module vanishes is a linear-invariant set, which we can think of as a linear-invariant property:

\begin{fact} \label{fact:propertiesModules}
There is a many-to-one correspondence between test $\GL_n(\F)$-modules and linear-invariant properties defined by polynomials. 
\end{fact}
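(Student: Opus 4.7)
The plan is to exhibit the correspondence explicitly via the map sending a test $\GL_n(\F)$-module $T$ to its vanishing locus $V(T) := \{f \in \Poly^d(x_1,\dotsc,x_n) : p(f) = 0 \text{ for all } p \in T\}$, and then to verify three points: (a) each $V(T)$ is a linear-invariant property defined by polynomials; (b) every linear-invariant property defined by polynomials arises in this way; and (c) the map is surjective but not injective (hence ``many-to-one'').

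For (a), picking any basis $\{p_1,\dotsc,p_k\}$ of $T$ exhibits $V(T)$ as the common vanishing locus of the $p_i$, so it is defined by polynomials in the sense of Definition~\ref{def:poly}. Linear-invariance follows from the identity $p(\Coeff_A(f)) = (p \circ \Coeff_A)(f)$ together with the $\GL_n(\F)$-invariance of $T$: if $p \in T$ and $A \in \GL_n(\F)$, then $p \circ \Coeff_A \in T$, so for $f \in V(T)$ we get $p(\Coeff_A(f)) = 0$, i.e., $f(A\var{x}) \in V(T)$.

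For (b), suppose $\Pi$ is linear-invariant and defined by test polynomials $q_1,\dotsc,q_m$ (finitely many suffice by Hilbert's Basis Theorem). Define $T$ to be the linear span, inside the space of test polynomials, of $\{q_i \circ \Coeff_B : 1 \leq i \leq m,\ B \in \GL_n(\F)\}$. By construction $T$ is $\GL_n(\F)$-invariant. The crucial point is that $T$ is finite-dimensional: since $\Coeff_B$ is a linear map on the coefficient variables $a_e$, the composition $q_i \circ \Coeff_B$ is a test polynomial of the same total degree as $q_i$, so the orbit of $q_i$ is contained in the finite-dimensional space of test polynomials of degree at most $\deg q_i$. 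The containment $V(T) \subseteq V(q_1,\dotsc,q_m) = \Pi$ is immediate since each $q_i \in T$; conversely, the linear-invariance of $\Pi$ forces each $q_i \circ \Coeff_B$, and hence every element of $T$, to vanish on $\Pi$, yielding $\Pi \subseteq V(T)$.

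For (c), non-injectivity is easy: if $T$ is one-dimensional, spanned by a $\GL_n(\F)$-semi-invariant $p$, then the one-dimensional module spanned by $p^2$ is a distinct test module with the same zero set; more generally, replacing a $\GL_n(\F)$-invariant ideal by its radical (when the radical remains finitely generated by a module) does not change the vanishing locus. The main obstacle in the argument is the finite-dimensionality claim in step (b)---the rest is routine bookkeeping translating between the action of $\GL_n(\F)$ on test polynomials and the dual action on input polynomials via $\Coeff_B$. Finite-dimensionality in turn reduces to the single observation that $\Coeff_B$ preserves the grading of the polynomial ring in the coefficient variables, because it is linear on those variables.
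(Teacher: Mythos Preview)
Your proof is correct and follows essentially the same route as the paper's (Appendix~\ref{app:invariant}): in both directions you form the vanishing locus of a test module and, conversely, take the $\GL_n(\F)$-orbit of a finite defining set and pass to its linear span. The one notable difference is your justification of finite-dimensionality in step~(b): the paper appeals to a ``standard fact'' about algebraic group actions, whereas you give the more elementary and self-contained observation that $\Coeff_B$ is linear in the coefficient variables and therefore preserves the degree of a test polynomial, confining the orbit to a single graded piece---this is a cleaner argument in the specific $\GL_n$ setting and avoids any black-box algebraic geometry.
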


That is, each linear-invariant property defined by polynomials is defined by some test $\GL_n(\F)$-module, and each test $\GL_n(\F)$-module defines a linear-invariant property\footnote{If $\F$ is algebraically closed, then Hilbert's Nullstellensatz implies that two test modules $T_1, T_2$ define the same invariant property if and only if the ideals of test polynomials generated by $T_1$ and $T_2$ have the same radical. Recall that the radical of an ideal $I$ is the ideal $\sqrt{I} \defeq \{ f : f^k \in I \text{ for some } k \geq 1\}$.}. The proof involves only basic observations regarding group actions and algebraic sets (see Appendix~\ref{app:invariant}).

\begin{example}
The vector space spanned by the test polynomial $b^2-4ac$ is a one-dimensional test $\GL_2(\F)$-module. For let $f(x,y) = a x^2 + b x y + c y^2$ and $A = \left(\begin{array}{cc} \alpha & \beta \\ \gamma & \delta \end{array} \right)$, and write $f(A\var{x}) = a (\alpha x + \beta y)^2 + b (\alpha x + \beta y)(\gamma x + \delta y) + c (\gamma x + \delta y)^2 = a' x^2 + b' x y + c' y^2$. Let $p(a,b,c) = b^2 - 4 ac$; then $p(\Coeff_{A}(a,b,c)) = p(a',b',c') = b'^2 - 4 a' c'$. A simple but tedious calculation then reveals that $b'^2 - 4a'c' = \det(A)^2\left(b^2 - 4ac\right)$, and hence that $p(\Coeff_{A}(a,b,c))$ is a scalar multiple of $p(a,b,c)$. 
\end{example}

\subsection{Separating modules and a first example}
\begin{defn} \label{def:sep}
A \definedWord{separating module}\footnote{Separating modules are nearly equivalent to the ``HWV obstructions'' of B\"{u}rgisser and Ikenmeyer \cite{BI}. For a discussion of the exact relationship and choice of terminology see Appendix~\ref{app:terminology}.} for the lower bound $\mathcal{C}_{hard} \not\subseteq \mathcal{C}_{easy}$ is a test module $T$ such that $T$ vanishes on every function in $\mathcal{C}_{easy}$, but does not vanish at some function $f_{hard} \in \mathcal{C}_{hard}$.
\end{defn}

The main thesis of this paper is that most arithmetic circuit lower bounds already use separating modules. We now demonstrate this with an example, by showing that Theorem~0 of Nisan and Wigderson \cite{nisanWigderson} uses a separating module. We first recall their definitions and result. In the next section we argue that the existence of a separating module was in some sense necessary.

An arithmetic circuit is \definedWord{homogeneous} if every gate in the circuit computes a homogeneous polynomial. The $d$-th elementary symmetric function in $n$ variables is the sum of all multilinear monomials of degree $d$ and is denoted $e_{d,n}$.

\begin{thm}[{\cite[Thm.~0]{nisanWigderson}}] \label{thm:NW} Over a field of characteristic zero, any homogeneous depth 3 arithmetic circuit computing $e_{2d,n}$ has size $\Omega\left( \left(\frac{n}{4d}\right)^{d}\right)$.
\end{thm}

When $d = c n$ for any $0 < c \leq 1/4$, this lower bound is exponential in $n$.

\begin{proof}[Proof outline]
The key property they consider is the dimension of the space of all partial derivatives (of all orders) of a function. We denote this space $\partial(f)$. First, they show that $\dim \partial(C) \leq s 2^{d}$ for any homogeneous depth 3 arithmetic circuit $C$ of size $s$ computing a polynomial of degree $d$. Next, they show that $\dim \partial(e_{2d,n}) \geq \binom{n}{d}$. Combining these inequalities, one gets $s2^{2d} \geq \binom{n}{d} \geq \left(\frac{n}{d}\right)^{d}$. 
\end{proof}

\begin{prop} \label{prop:example0} There is a separating module for the lower bound of Theorem~\ref{thm:NW}. \end{prop}

\begin{proof}
Let $\Pi(r)$ denote the property ``$\dim \partial(f) \leq r$.'' We argued in Example~\ref{ex:pder} that $\dim \partial(f)$ is a linear-invariant property for homogeneous $f$. We now show that this property is defined by a test $\GL_{n}(\F)$-module, and hence that the above proof yields a separating module.

Let $f(\var{x}) = \sum_e a_e \var{x}^e$ be a homogeneous polynomial of degree $d$ (the only nonzero terms in the sum are those for which $\sum_{i} e_{i} = d$) and consider the following matrix $M_f$. The columns of $M_f$ are indexed by the monomials of degree $\leq d$, and the rows of $M_f$ are indexed by the partial derivative operators (these are in bijective correspondence with monomials, but we refer to them this way to keep track of which is which). The entry in the $\partial^k / \partial x_{i_1} \dotsb \partial x_{i_k}$ row and the $\var{x}^e$ column is the coefficient of $\var{x}^e$ in $\partial^k f / \partial x_{i_1} \dotsb \partial x_{i_k}$. Note that this coefficient is some linear combination of the coefficients $a_e$ of $f$. 

Then the dimension of $\partial(f)$ is the same as the (row) rank of $M_f$. It is a standard fact from linear algebra that $M_f$ has rank $\leq r$ if and only if all the $(r+1) \times (r+1)$ minors of $M_f$ vanish. Each such minor is a degree $r+1$ polynomial of the entries of $M_f$, which are themselves linear combinations of the coefficients $a_e$ of $f$. Hence each such minor is a test polynomial of degree $r+1$. Let $T(r)$ denote the linear span of these minors. We have just shown that (the vanishing of the test polynomials in) $T(r)$ defines the property $\Pi(r)$.

In particular, $\Pi(r)$ is a linear-invariant property defined by polynomials. By Fact~\ref{fact:propertiesModules} $\Pi(r)$ is defined by some test module, which is thus a separating module. However, we can argue further that $T(r)$ itself is a test $\GL_n(\F)$-module, and hence a separating module for the lower bound of Theorem~\ref{thm:NW}.

In Example~\ref{ex:pder} we essentially showed that $M_{f(A\var{x})}$ is related to $M_{f(\var{x})}$ by left and right multiplication by some matrices related to $A$ (in a similar way to how $\Coeff_{A}$ is related to $A$). It is a standard fact about minors that the $(r+1) \times (r+1)$ minors of $BM_f C$ are linear combinations of the $(r+1) \times (r+1)$ minors of $M_f$. Hence for any test polynomial $p \in T(r)$, $p \circ \Coeff_{A}$ is also in $T(r)$. Thus $T(r)$ is a separating module for Theorem~\ref{thm:NW}.
\end{proof}

As with everything in complexity, in fact what we have is a family of separating modules. Namely, if we consider $e_{2d,n}$ with $d=n/8$, then $T(2^{n/8})$ vanishes at every polynomial computed by a depth 3 homogeneous circuit of degree $n/4$ and size at most $2^{n/8}$, but does not vanish at $e_{n/4,n}$. 

\subsection{Generalizations} \label{sec:general}
For other lower bounds it is useful to generalize some of the above notions.

First, we can allow input objects other than input polynomials. For example, in the context of matrix rigidity it will be useful to consider input matrices. Regardless of the input objects, we still speak of test polynomials. In the case of input matrices, test polynomials are then polynomials whose variables are the coordinates $a_{ij}$ of the input matrices. In the context of Boolean functions, we often first represent a function by its unique multilinear polynomial, and then work in the context of input polynomials. But one could imagine a more direct representation in terms of something like ``input circuits.'' In the context of the degree bound \cite{strassenDegree, baurStrassen} and the connected components sorting lower bound \cite{benOr}, the input objects are (semi-)algebraic sets, given by their defining polynomial (in)equalities. The variables for the test polynomials are then the coefficients of the equations defining the algebraic sets.

Second, we can allow other types of invariance besides linear invariance. For example, we can hardly imagine a complexity measure or lower bound proof that depends on the order or names of the variables. Hence all properties used in complexity can be expected to be \definedWord{permutation-invariant}: $f(x_1, \dotsc, x_n)$ has the property if and only if $f(x_{\pi(1)}, \dotsc, x_{\pi(n)})$ has the property, for any permutation $\pi$. We then speak of test $S_n$-modules, and the analog of Fact~\ref{fact:propertiesModules} holds (see Fact~\ref{fact:propertiesModulesGeneral}). Note that $S_n$-modules are still defined as vector spaces; the use of vector spaces in the definition of test module was not specific to $\GL_n$. We will use permutation-invariance in the contexts of matrix rigidity and multilinear formulas and circuits, as these concepts are not linear-invariant but they are permutation-invariant.

Another type of invariance that often arises is affine invariance. Here we generalize from linear transformation $\var{x} \mapsto A\var{x}$ to affine transformations $\var{x} \mapsto A\var{x} + \var{b}$, with $A \in \GL_n(\F)$ and $\var{b} \in \F^{n}$. The group of all such transformations is the affine general linear group $\AGL_n(\F)$. We then speak of \definedWord{affine-invariant} properties and test $\AGL_n(\F)$-modules. Again, the analog of Fact~\ref{fact:propertiesModules} holds (see Fact~\ref{fact:propertiesModulesGeneral}).

When the invariance is understood from context, we may simply refer to test modules and separating modules without reference to a particular group.

\section{On the necessity and utility of separating modules and border complexity} \label{sec:necessary}
In Section~\ref{sec:invariant} we argue that the use of invariant properties is essentially necessary. In Section~\ref{sec:border} we discuss situations where furthermore the use of separating modules is essentially necessary. Although not all complexity classes are defined by the vanishing of test polynomials, in Section~\ref{sec:constructible} we argue that all nonuniform complexity classes, including Boolean ones, are ``constructible'' by test polynomials (see Definition~\ref{def:constructible}). Finally, in Appendix~\ref{app:necessary} we give a heuristic argument as to why separating modules are likely to be the easiest way to prove lower bounds against constructible complexity classes, and shed light on a complexity class even when their use is not strictly necessary. Hence separating modules should be a first approach to try. We defer this final argument to an appendix only because it is heuristic, somewhat technical, and possibly contentious, and we do not wish to distract from the main points of the paper. However, one argument for this which we can already state is that \emph{most arithmetic circuit lower bounds already use separating modules}, as shown in this paper.

Throughout this section and Appendix~\ref{app:necessary}, we only discuss nonuniform lower bounds. If $\mathcal{C}$ is a nonuniform complexity class, then $\mathcal{C}_n$ denotes the functions in $\mathcal{C}$ with $n$ inputs. By a ``property'' in general, we mean a set of input polynomials, or more generally input objects.

\subsection{Invariant properties are necessary} \label{sec:invariant}
First we show that if $\mathcal{C}_n$ is invariant under some group $G$---such as $\GL_n$, $S_n$, etc.---then any property used to prove a lower bound against $\mathcal{C}_n$ can be transformed into a $G$-invariant property that proves the same lower bound. Then we argue that essentially all ``naturally occurring'' complexity classes and complexity measures are permutation-invariant, and many are linear- or affine-invariant. 

\textbf{If any property can be used against a $G$-invariant class, a $G$-invariant property can.} Suppose property $\Pi$ is used to prove a lower bound\footnote{For readers familiar with Natural Proofs \cite{razborovRudich}, note that we are using the complementary notion of ``useful property'' here. They use properties $\Pi$ that are disjoint from $\mathcal{C}_n$, whereas we use properties $\Pi$ that completely contain $\mathcal{C}_n$. By taking the complements of sets, the two viewpoints are equivalent. We chose our viewpoint because it has nicer algebro-geometric properties, as in Appendix~\ref{app:necessary}.} against $\mathcal{C}_n$ by showing that $\mathcal{C}_n \subseteq \Pi$ and $f_{hard,n} \notin \Pi$. Let $\Pi^{G}$ denote the unique maximum $G$-invariant subset contained in $\Pi$; this exists by Zorn's Lemma, as an arbitrary union of $G$-invariant subsets is $G$-invariant. As $\mathcal{C}_n$ is $G$-invariant, by the definition of $\Pi^{G}$ we have $\mathcal{C}_n \subseteq \Pi^{G}$. The $G$-invariant property $\Pi^{G}$ then proves the same lower bound as $\Pi$, as $f_{hard,n} \notin \Pi \supseteq \Pi^{G} \supseteq \mathcal{C}_n$.

\textbf{Essentially all complexity classes are permutation-invariant.} All complexity measures and complexity classes we are aware of are permutation-invariant: they do not depend on the names or order of the variables. Indeed, we imagine that any complexity class or measure that was \emph{not} permutation-invariant would be quite perverse, as the complexity of computing a function should really not depend on whether its variables are called $x_1, \dotsc, x_n$ or $a,b,c,\dotsc$, or $x_n, \dotsc, x_1$. Thus we can expect that any lower bound uses a permutation-invariant property, at the very least.

Many complexity classes, particularly algebraic ones, are furthermore linear- or affine-invariant. For example, arithmetic circuit size does not change by more than an additive difference of $n$ after a linear or affine transformation\footnote{In the model where addition gates can compute linear or affine combinations; in the weaker model where addition gates are just addition gates, the size still does not change by more than $O(n^2)$.}. Additionally, circuit depth increases by at most 1; for circuits whose bottom gates are linear combination gates, the depth need not increase at all. For example, $\cc{AC}^0[2]$ is $\AGL_n(\F_2)$-invariant (though we note that $\cc{AC}^0$ is not $\GL_n(\F_2)$-invariant, as $\F_2$-linear transformations are as powerful as parity). This is in line with Kayal's initial observation \cite[Sec.~5.2]{kayal} that several known lower bounds use affine-invariant properties, and with our observations in this paper.

Hence, for all naturally occurring nonuniform complexity classes, if any property can be used to prove a lower bound, a permutation-invariant property can be used.

\subsection{Test polynomials and border complexity} \label{sec:border}
A complexity class $\mathcal{C}_n$ is typically not defined by the vanishing of some test polynomials. Hence when we prove a lower bound against $\mathcal{C}_n$ using test polynomials, we in fact prove a lower bound against the slightly larger class which we denote $\overline{\mathcal{C}_n}$ and refer to as ``border-$\mathcal{C}_n$,'' in line with normal usage in other contexts (the overline is for Zariski-closure; see Definition~\ref{def:constructible}). Standard results in algebraic geometry (\eg, \cite[Thm.~2.33]{mumford}, \cite[\S 20.6]{BCS}) imply that $\overline{\mathcal{C}_n}$ consists of all functions $f$ which can be written as a limit\footnote{Over $\C$ the notion of limit is defined in the usual manner. Over, say, $\overline{\F}_p$, we say a function $f$ is a limit of points in $\mathcal{C}_n$ if there is a one-dimensional family of functions $f_{t}$ such that $f_{t}$ is well defined and in $\mathcal{C}_n$ for all but finitely many values of $t \in \overline{\F}_p$, and $f_0 = f$. There is one additional technical condition here, but we omit it since it does not affect our discussion.} of functions in $\mathcal{C}_n$.

In the next section we show that $\mathcal{C}$ is not too far from border-$\mathcal{C}$. In Appendix~\ref{app:necessary} we argue that proving lower bounds against border-$\mathcal{C}$ is likely to be the easiest way to prove lower bounds against $\mathcal{C}$, despite being a formally stronger statement. Here we present examples where there is known to be little or no difference, and begin arguing for the utility of border complexity.

\begin{example}[Matrix multiplication]
In the context of matrix multiplication the typical complexity measure is \emph{tensor rank}, which is essentially the number of non-scalar multiplications needed to multiply two matrices. Tensor rank is known to agree with the total number of arithmetic operations up to a constant factor. The corresponding border complexity measure is called \emph{border rank}, or sometimes ``approximative complexity,'' first introduced by Bini, Capovani, Lotti, and Romani \cite{BCRL}. In general, border rank can be smaller than tensor rank. However, Bini \cite{bini} showed that the exponent of matrix multiplication calculated with tensor rank---the smallest $\omega$ such that $n \times n$ matrix multiplication has tensor rank $O(n^{\omega})$---is the same as the exponent calculated with border rank. Thus, although border rank and tensor rank are not equal, they give the same asymptotic answer for matrix multiplication.

Furthermore, the use of border rank has greatly increased our understanding of both upper and lower bounds for matrix multiplication. One of the main tools for finding efficient algorithms for matrix multiplication is Sch\"{o}nhage's asymptotic sum inequality \cite{schonhage}, which shows that an upper bound on border rank implies an upper bound on tensor rank. Conversely, most lower bounds on matrix multiplication seem to have a border rank lower bound at their heart. For example, Landsberg \cite[\S 6]{landsbergMMSurvey} showed that Bl\"{a}ser's tensor rank lower bound \cite{blaser}---the then best known bound---implicitly uses the same key lemma that Strassen used \cite{strassenBorderRk} to give a border rank lower bound. The currently best known lower bound on tensor rank \cite{landsberg2,massarentiRaviolo} also uses techniques from the best known lower bound on border rank \cite{landsbergOttaviani}. 
\end{example}

\begin{example}[Permanent versus determinant] 
In the context of permanent versus determinant, the typical complexity measure is \emph{determinantal complexity}: the size of the smallest matrix $M(\var{x})$ with linear combinations of the variables $\var{x}$ for entries such that $\det(M(\var{x})) = \perm(\var{x})$. Mulmuley and Sohoni \cite{gct1} use the analogous notion of border determinantal complexity, which they refer to as ``infinitesimal approximative'' complexity. Independently, B\"{u}rgisser, Landsberg, Manivel, and Weyman \cite[Prop.~9.4.3]{BLMW} and the author \cite[Prop.~3.5.4]{grochowPhD} show that under certain fairly general circumstances the border determinantal complexity only differs from the determinantal complexity by a polynomial, and state a conjecture which would imply this is always the case. Thus border complexity here is not as far from standard complexity as it may at first seem.

In contrast, Mulmuley and Sohoni \cite[\S 4.2]{gct1} give an example of a function which has border determinantal complexity $poly(n)$ but which may have super-polynomial determinantal complexity. Such functions exhibit a difference in the difficulties of resolving the complexity of matrix multiplication and resolving the permanent versus determinant problem. Nonetheless, they conjecture \cite[Conj.~4.3]{gct1} that no $\cc{VNP}$-hard function has polynomial border determinantal complexity. One might also guess that for quasi-polynomial complexity there is no difference, that is, that the following question has a positive answer:

\begin{open} \label{open:border}
Does polynomial, or more generally quasi-polynomial, border determinantal complexity imply quasi-polynomial determinantal complexity? Equivalently, is $\overline{\cc{VP}_{ws}} \subseteq \cc{VQP}$ or more generally $\overline{\cc{VQP}} = \cc{VQP}$?
\end{open}
\end{example}

Either way, as all of our current techniques give bounds on border complexity, Question~\ref{open:border} is an archetype of a fundamental question of the difference between the way complexity classes are usually defined and the methods we use for proving lower bounds against them.

Because of the above results and the prevalent use of test polynomials in known lower bounds, as well as the arguments in Appendix~\ref{app:necessary}, we submit that border complexity in general---not only in the context of matrix multiplication---is a natural and useful measure of complexity from the perspective of lower bounds (and, in the context of matrix multiplication, upper bounds as well!).

\subsection{Nonuniform complexity classes are constructible by test polynomials}
\label{sec:constructible}
Over any field, if $\mathcal{C}_n$ is defined by test polynomials, say $\mathcal{C}_n = \{ f | t_1(f) = t_2(f) = \dotsb = t_k(f) = 0\} \subseteq \Poly^{d(n)}(x_{1}, \dotsc, x_n)$, then $f_{hard,n} \notin \mathcal{C}_n$ if and only if there is some $1 \leq i \leq k$ such that $t_i(f_{hard}) \neq 0$. For such classes, the use of test polynomials is necessary and sufficient to prove a lower bound. However, most complexity classes are not defined by test polynomials in this manner. We will argue here that all naturally occurring complexity classes are nonetheless ``constructible'' by test polynomials (definition below). In Appendix~\ref{app:necessary} we argue that test polynomials---and hence, via Section~\ref{sec:invariant} and Fact~\ref{fact:propertiesModules}, separating modules---are nonetheless incredibly useful for understanding such constructible (invariant) complexity classes.
 
\begin{defn}[Zariski, \ie\ algebro-geometric, topology] \label{def:constructible}
A set defined by the vanishing of test polynomials is called \emph{(Zariski-)closed}. A set is \definedWord{constructible} if it can be constructed from closed sets by taking complements, unions, and intersections.
\end{defn}

The \emph{closure} of a set $S$ is the smallest closed set containing $S$, and is denoted $\overline{S}$. If $S$ is a Zariski-constructible set over $\C$, then its Zariski-closure coincides with its closure in the usual complex topology (see, \eg, \cite[Thm.~2.33]{mumford}). Note that the closure $\overline{S}$ is the set of all points which cannot be separated from $S$ by test polynomials. 

The main insight of this section is a corollary to Chevalley's constructibility theorem. To state this theorem, we need one more concept. A map $\varphi\colon A \to B$ between constructible sets is called \emph{algebraic} if its graph $\{(a, \varphi(a)) | a \in A \}$ is a closed subset of $A \times B$. Equivalently, let $x_1, \dotsc, x_n$ be coordinates on $B$, not necessarily independent; then $\varphi$ is algebraic if and only if for each $i$, $x_i(\varphi(a))$ can be expressed as a polynomial in the coordinates of $a \in A$. 

Chevalley's Theorem is most concisely stated for Noetherian rings, but we will not need their definition here. For our purposes it suffices that this includes $\Z$, $\Z/n\Z$, rings of algebraic integers, all fields, polynomial rings, and quotients of polynomial rings.

\begin{thm}[{Chevalley's Theorem\footnote{The original version of this theorem over algebraically closed fields is from Chevalley and Cartan \cite{chevalleyCartan}. The general version, which is in fact more general than stated here, can be found as {\cite[Thm.~1.8.4]{EGA4}}. See Eisenbud {\cite[Cor.~14.7]{eisenbud}} for a purely ring-theoretic treatment of what is essentially the general case, or Matsumura {\cite[Ch.~1, \S 6]{matsumura}}.}}] Over any Noetherian ring the image of any algebraic map is constructible. 
\end{thm}

We are not aware of any nonuniform complexity classes---algebraic or otherwise---that do not belong to one of the classes described in the following corollary:

\begin{cor} \label{cor:constructible}
Let $\mathcal{C}$ be a nonuniform complexity class; then $\mathcal{C}_n$ is (Zariski-)constructible if any of the following hold:
\begin{enumerate}
\item $|\mathcal{C}_n|$ is finite; or

\item \label{cond:map} $\mathcal{C}$ is closed under simple (resp. linear, resp. affine) projections, and contains a problem that is complete under simple (resp. linear, resp. affine) projections; or

\item \label{cond:cktshape} $\mathcal{C}_n$ is defined by a class of circuits that are restricted to have one of finitely many (a number which may grow with $n$) shapes. Here by the ``shape'' of a circuit, we mean the underlying directed acyclic graph together with operators labeling the internal nodes; or

\item More generally, $\mathcal{C}_n$ is first-order definable in the language of rings over a Noetherian ring, or in the language of ordered rings over an ordered Noetherian ring.
\end{enumerate}
\end{cor}

A ``simple projection'' here means any map that sends each variable $x_i$ to a constant $\alpha$ or to a constant multiple of a variable $\alpha y_j$. A linear projection sends each $x_i$ to a linear combination of variables $\sum_j \alpha_{ij} y_j$, and an affine projection additionally allows an additive constant: $x_i \mapsto \alpha_i + \sum_j \alpha_{ij} y_j$. 

Condition (\ref{cond:cktshape}) includes circuit classes defined in terms of fan-in, size, depth, or connectivity properties like skew or weakly-skew. 

\begin{proof}
(1) Any finite set is defined by the vanishing of test polynomials, \ie\ it is closed, hence constructible.

(2) The set of simple (resp. linear, resp. affine) projections is closed, as we show below; denote this set by $R$, for ``reductions.'' If $f_n$ is a complete function, and $F$ is the space of input functions (objects, etc.), then define a map $\varphi\colon R \to F$ by $\varphi(r) = r(f_n)$. From the definition of projection, it is easily seen that $\varphi$ is algebraic. Then $\mathcal{C}_n$ is the image of $\varphi$, hence is constructible by Chevalley's Theorem.

The set of linear (resp. affine) projections from functions on $n$ variables to functions on $m$ variables is just the set of $m \times n$ (resp. $(m+1) \times n$) matrices, so is closed. The set of simple projections is the subset of affine projections defined by the property that each column of the $(m+1) \times n$ matrix has at most one nonzero entry. The latter condition is equivalent to the condition that the product of any two entries from a given column vanishes, hence the set of simple projections is closed.

(3) For each circuit shape $G$, the set of circuits of that shape is $\F^{N}$ where $N$ is the number of edges whose endpoints are linear combination gates. Let $\text{Ckt}_{G}$ denote this space, and let $\varphi_{G}\colon \text{Ckt}_{G} \to \Poly^d(x_1, \dotsc, x_n)$ be the map which takes each circuit of shape $G$ to the function it computes. It is easily seen that $\varphi_{G}$ is algebraic, so its image is constructible by Chevalley's Theorem. Then $\mathcal{C}_n$ is the union over finitely many shapes $G$ of $\im(\varphi_{G})$. As a union of constructible sets is constructible, so is $\mathcal{C}_n$.

(4) A first-order definable set is defined by some first-order formula. For quantifier-free formulas, this is exactly a set defined by a logical combination of equalities and inequalities, namely a constructible set. The only tricky part is then to handle quantifiers. By replacing a universal quantifier $\forall x$ by $\neg \exists x \neg$ and noting that the complement of a constructible set is constructible, we need only handle existential quantifiers. If $\varphi(\var{x})$ is a first-order formula without quantifiers, let $\mathcal{C}'$ denote the set of those $\var{x}$ that satisfy $\varphi(\var{x})$. Then the set defined by $\exists x_0 \varphi(\var{x})$ is equal to the image of $\mathcal{C}'$ under the projection which sends $(x_0, x_1, \dotsc, x_n) \mapsto (x_1, \dotsc, x_n)$. By Chevalley's Theorem, the image of this projection is constructible.
\end{proof}

Note that if a circuit class is defined as the image of some map---as nearly all of them are, as in conditions (\ref{cond:map}) and (\ref{cond:cktshape})---finding its representation as a union of differences of closed sets may be difficult, even uncomputable. However, over finite fields this is a finite problem, hence computable, and over algebraically closed fields or real closed fields quantifier elimination algorithms such as Tarski's \cite{tarski} make this process effective.

\begin{rmk}
The (Zariski-)closure of classes satisfying condition (\ref{cond:map}) of Corollary~\ref{cor:constructible} for linear or affine projections are \emph{orbit closures} for $\GL_n$, respectively $\AGL_n$. Much of the current research in GCT studies the orbit closures associated to the permanent, determinant, and matrix multiplication. Considering their structure as orbit closures rather than just $G$-invariant sets facilitates their study greatly, much as the existence of complete problems facilitates the study of a complexity class. In this paper we show that by extending our viewpoint to all $G$-invariant complexity classes and not just orbit closures, GCT becomes much more general and far-reaching.
\end{rmk}

\section{Discussion, relation to the GCT program, and future directions} \label{sec:discussion}
In this paper, we show that most arithmetic circuit lower bounds and implications between lower bounds fit naturally into the representation-theoretic framework suggested by geometric complexity theory, specifically in the form of separating modules. In this section we discuss further implications of this connection, as well as which lower bounds seem to not fit into this framework (ones which are essentially uniform), the status of lower bounds in positive characteristic, and the relation between this work and the larger GCT program.

Eric Allender observed that all the lower bounds mentioned here use Razborov--Rudich-natural \cite{razborovRudich} properties\footnote{\label{fn:RR} The Boolean properties satisfy the Razborov--Rudich conditions, and although there is no known algebraic analog of the Razborov--Rudich barrier, the algebraic properties mentioned in the previous sections seem like they ought to fulfill the requirements of such an analog, were it to exist.}, and asked whether this was just a coincidence. In light of the generality of separating modules (Section~\ref{sec:necessary}), we believe that it is indeed a coincidence, and has more to do with the fact that most known results use such properties than it has to do with any inherent limitations of the representation-theoretic viewpoint. Indeed, there is evidence that the GCT approach over $\C$ philosophically (see Footnote~\ref{fn:RR}) avoids the Razborov--Rudich barrier (the author's thesis \cite[Sec.~3.4.3]{grochowPhD} contains an overview of such evidence).

\subsection{Relation to Geometric Complexity Theory} \label{sec:gct}
To state how the separating modules used in this paper differ from the geometric obstructions defined in Mulmuley and Sohoni \cite{gct2}, and to discuss possible further interactions between previous results and geometric complexity theory, we first recall two standard definition from representation theory, as applied to test modules. Test $G$-modules for any group $G$---such as $\GL_n(\F)$, $S_n$, etc.---are, in particular, representations of $G$; indeed, the term ``module'' is often used interchangeably with ``representation'' (see Appendix~\ref{app:terminology} for more on the terminology). When we consider a test $G$-module as just a representation of $G$ (equivalently, as a $G$-module), we forget that it consists of test polynomials, and only remember that it is a vector space and how the elements of $G$ move vectors around within this vector space.

\begin{defn}
A (test) $G$-module $T$ is \definedWord{irreducible} if there is no nonzero proper subspace of $T$ that is also a (test) $G$-module.
\end{defn}

A classical theorem (see, \eg, \cite{fultonHarris}) says that over an algebraically closed field of characteristic zero, every $\GL_n$- or $S_n$-module is a direct sum (as representations, that is, as vector spaces) of irreducible submodules. In particular, this implies that if there is a separating module for a lower bound over $\C$, there is an irreducible separating module. We could have included irreducibility in the definition of test module for this reason, but chose not to in order to keep the definition simple and to avoid complications over other fields, especially finite fields. The property of splitting into a direct sum of irreducible submodules is known as ``complete reducibility.'' It is known to fail in general for $\AGL_n$-modules, even over $\C$, and for $\GL_n$- and $S_n$-modules in positive characteristic.

\begin{defn} \label{def:equiv}
Two (test) $\GL_n$-modules $T_1, T_2$ are \definedWord{equivalent} (as representations) if there is a bijective linear map $L\colon T_1 \to T_2$ such that for all $A \in \GL_n$ and all test polynomials $p \in T_1$, $L(p \circ \Coeff_{A}) = L(p) \circ \Coeff_{A}$.
\end{defn}

This definition is purely representation-theoretic, in that it ignores the ``test polynomial'' structure of the test modules, and treats them only as representations. Because this notion of equivalence forgets the underlying polynomials of the test modules, it is possible---and is likely to be the generic situation---for two equivalent test $\GL_n$-modules to define distinct linear-invariant properties. Nonetheless, the term ``equivalent'' is standard in representation theory, so we use it here.

To discuss the geometric obstructions of GCT, we work over $\C$. By complete reducibility, the space of all test polynomials can be written as a direct sum of test $\GL_n(\C)$-modules. If we group these modules by their equivalence classes, we may write the space of all test polynomials as the direct sum $\bigoplus_{\lambda} \bigoplus_{i=1}^{m_{\lambda}} T_{\lambda,i}$ where the $\lambda$\,s index the irreducible equivalence classes. (An equivalence class is called irreducible if any representation in this class is irreducible.) It turns out that each equivalence class $\lambda$ can only occur amongst a specific degree $d(\lambda)$ of test polynomials, and since the space of test polynomials of any fixed degree is finite-dimensional, each $m_{\lambda}$ is finite. Moreover, the numbers $m_{\lambda}$ are independent of the choice of direct sum. We refer to $m_{\lambda}$ as the \definedWord{multiplicity} of the equivalence class $\lambda$ in the space of test polynomials. 

If $\mathcal{C}$ is a linear-invariant complexity class, and we consider the space of all test polynomials that vanish everywhere on $\mathcal{C}$ (hence on $\overline{\mathcal{C}}$, see Section~\ref{sec:border}), we may write this space as $\bigoplus_{\lambda} \bigoplus_{i=1}^{m_{\lambda}(\overline{\mathcal{C}})} T_{\lambda,i}$. Note that $m_{\lambda}(\overline{\mathcal{C}}) \leq m_{\lambda}$. 

\begin{defn}[Mulmuley and Sohoni \cite{gct2}\footnote{Although only geometric obstructions were explicitly defined in \cite{gct2}, multiplicity obstructions were essentially defined there: see the sentence just before \cite[Def.~1.2]{gct2}.}] A \definedWord{multiplicity obstruction} for the lower bound $\mathcal{C}_{hard} \not\subseteq \mathcal{C}_{easy}$ is an irreducible equivalence class $\lambda$ such that $m_{\lambda}(\overline{\mathcal{C}_{easy}}) > m_{\lambda}(\overline{\mathcal{C}_{hard}})$. A \definedWord{occurrence obstruction} or \definedWord{geometric obstruction} for $\mathcal{C}_{hard} \not\subseteq \mathcal{C}_{easy}$ is a multiplicity obstruction which further has $m_{\lambda}(\overline{\mathcal{C}_{easy}}) = m_{\lambda}$, that is, every test module equivalent to $\lambda$ vanishes on $\mathcal{C}_{easy}$.
\end{defn}

The existence of a multiplicity obstruction $\lambda$ implies the existence of a separating module, as then there must be some test $\GL_{n}(\C)$-module of type $\lambda$ that vanishes on $\mathcal{C}_{easy}$ but not on $\mathcal{C}_{hard}$. These are referred to as ``obstructions'' because they obstruct the inclusion $\mathcal{C}_{hard} \subseteq \mathcal{C}_{easy}$, much as a $K_5$-minor obstructs a planar embedding of a graph. 

One advantage of considering multiplicities rather than test modules is that it opens the possibility of using purely representation-theoretic techniques to understand the multiplicities, as is being pursued in GCT (\eg, \cite{blasiak,BCI,gct4,AdSoSu}). To see how this is possible---that is, how one can discuss multiplicity obstructions without reference to actual test polynomials or modules thereof---we must mention a bit more about the representation theory of $\GL_n$ and $S_n$. Over $\C$, the irreducible representations of these groups have been classified for over 100 years (see, \eg, \cite{fultonHarris}). The equivalence classes of irreducible representations are in bijective correspondence with integer partitions---partitions with at most $n$ parts in the case of $\GL_n(\C)$, and partitions of the number $n$ in the case of $S_n$. The use of partitions enables us to talk about the multiplicities $m_{\lambda}$ and $m_{\lambda}(\overline{\mathcal{C}})$ without reference to any particular (test) module. This is just one of the advantages of the representation-theoretic viewpoint; we discuss two other advantages below.

\subsection{Understanding old lower bounds better (even tight ones!)} \label{sec:old}
In this paper, we show that most arithmetic circuit lower bounds yield separating modules, but typically just one separating module for each lower bound. While this suffices for the lower bound, considering other separating modules that can be used for a given lower bound (or non-separating test modules) may give deeper insight. Indeed, by Fact~\ref{fact:propertiesModules}, this is equivalent to knowing which other invariant properties defined by polynomials can be used (or not) for a lower bound. Understanding which (invariant) properties a complexity class has is surely a task worth undertaking, even for lower bounds that are already tight or as good as we want.

However, trying to understand all such test modules is quite an enormous task. It does not just ask for new proofs of old lower bounds---for example, just asking for a single new separating module for the lower bound---but rather asks to understand, in some sense, \emph{all} possible proofs of a given lower bound. Instead, the difference between separating modules and multiplicity obstructions suggests a more feasible step in this direction which may well be within reach:

\begin{open} \label{open:geometric}
Upgrade the proofs of lower bounds mentioned in this paper from separating modules to multiplicity (or stronger: occurrence) obstructions. 
\end{open}

As a first step towards Open Question~\ref{open:geometric}, which the author hopes to make the subject of future work, we have:

\begin{open} \label{open:labels}
Determine the labels (partitions, see Section~\ref{sec:gct}) of the separating modules in the lower bounds mentioned in this paper.
\end{open}

\subsection{The role of explicitness and constructivity} \label{sec:explicit}
Mulmuley \cite{gctflip} and Williams \cite{williamsConstructive} have both argued for the necessity of constructive methods in proving lower bounds. We can use the representation-theoretic viewpoint to give a further argument for explicitness, albeit a heuristic one. It also allows us to quantify the explicitness or constructivity of known proofs in various ways.

Suppose we are trying to prove $\mathcal{C}_{hard} \not\subseteq \mathcal{C}_{easy}$. In Section~\ref{sec:necessary} and Appendix~\ref{app:necessary} we argue that this is likely to be done using separating modules. If such a separating module exists, it should furthermore be the case that a \emph{random} test module that vanishes on $\mathcal{C}_{easy}$ should not vanish on $\mathcal{C}_{hard}$---and hence be a separating module---for some notion of ``random'' which can probably be made precise. However, to prove the existence of a separating module unconditionally---that is, without assuming the lower bound we are trying to prove---one seems to need a more explicit description of the separating module. This is related to the recent results of Mulmuley \cite{gct5} linking derandomization with algorithms for computational problems in algebraic geometry.

One measure of constructivity is the degree and number (dimension) of test polynomials used. As in the context of Razborov--Rudich \cite{razborovRudich} and Williams \cite{williamsConstructive}, we should expect to measure this degree as a function of something like the size of the truth table of the input polynomials involved. In an algebraic context, we might replace truth table size by the number of monomials. For polynomials of degree $O(n)$ in $\poly(n)$ variables, the number of monomials is $2^{O(n \log n)}$, which is comparable to truth table size.

Another more delicate measure of constructivity is the complexity of verifying that a given test module (perhaps from a specific subset of test modules) is indeed a separating module. This is related to our discussion in Appendix~\ref{app:necessary}.

Using the fact that partitions classify the irreducible representation of $\GL_n$ or $S_n$ over $\C$, we get another measure of constructivity. In general, the dimension of an irreducible representation can be exponential in the bit-size of its corresponding partition, so the partition can serve as a succinct label of an equivalence class of representations. One can then consider the computational complexity of constructing from $0^n$ a partition corresponding to a multiplicity (or occurrence) obstruction for a nonuniform lower bound at input length $n$. Mulmuley conjectures \cite{gctflip} that this construction problem can be solved in $\cc{P}$ for occurrence obstructions in the context of permanent versus determinant and $\cc{NP}$ versus $\cc{P/poly}$. In fact, Mulmuley suggests that finding a polynomial-time algorithm to verify whether a given $\lambda_n$ is the label of an obstruction is a crucial first step towards proving the existence of obstructions unconditionally. This suggests a strengthening of Open Question~\ref{open:geometric}:

\begin{open}
Upgrade the lower bounds mentioned in this paper to multiplicity obstructions where the label $\lambda_n$ of the obstruction at input length $n$ can be computed in $poly(n)$-time. 
\end{open}

Note that resolving Question~\ref{open:labels} would provide natural candidates for labels $\lambda$ that might be multiplicity obstructions. Both Question~\ref{open:labels} and this one seem within reach, especially given the recent occurrence obstructions constructed by B\"{u}rgisser and Ikenmeyer \cite{BI} in the context of matrix multiplication.

The more general question of verification is also interesting:
\begin{open}
For any of the lower bounds mentioned here, what is the complexity of verifying multiplicity or occurrence obstructions? That is, given $\lambda_n$, what is the complexity of verifying that $\lambda_n$ is indeed a multiplicity obstruction?
\end{open}

\subsection{Boolean circuit lower bounds} \label{sec:boolean}
\begin{obs} \label{obs:boolean}
For any Boolean circuit lower bound against a permutation-invariant complexity class---which includes all natural classes, see Section~\ref{sec:invariant}---there is a separating $S_n$-module.
\end{obs}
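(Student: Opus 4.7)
The plan is to combine the identification of Boolean functions with multilinear polynomials (described at the start of Section~\ref{sec:general}) with the $S_n$-analog of Fact~\ref{fact:propertiesModules} (namely Fact~\ref{fact:propertiesModulesGeneral}), plus the trivial observation that any finite subset of a finite-dimensional affine space is Zariski-closed.

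First, I would fix a field $\F$ and identify each Boolean function $f\colon \{0,1\}^n \to \{0,1\}$ with its unique multilinear representation in $\Poly(x_1, \dotsc, x_n)$, giving an injection into the $2^n$-dimensional multilinear subspace. Under this identification the combinatorial $S_n$-action on Boolean functions (permuting input bits) agrees with the variable-permutation action of $S_n \subseteq \GL_n(\F)$ on polynomials, as is immediate from the monomial basis indexed by subsets of $[n]$, on which $\pi \in S_n$ sends the basis element for $S$ to the one for $\pi(S)$. So the permutation-invariance hypothesis on $\mathcal{C}_{easy,n}$ becomes $S_n$-invariance of its image inside the multilinear polynomial space.

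Next, since there are at most $2^{2^n}$ Boolean functions on $n$ bits, $\mathcal{C}_{easy,n}$ sits inside this space as a finite set, hence as a Zariski-closed subset whose vanishing ideal is generated by finitely many test polynomials (Hilbert's Basis Theorem). Thus ``lies in $\mathcal{C}_{easy,n}$'' is an $S_n$-invariant property defined by polynomials, and Fact~\ref{fact:propertiesModulesGeneral} then produces a test $S_n$-module $T$ defining it. For any $f_{hard} \in \mathcal{C}_{hard,n} \setminus \mathcal{C}_{easy,n}$, this $T$ vanishes on all of $\mathcal{C}_{easy,n}$ but not at $f_{hard}$---else $f_{hard}$ would satisfy the defining property and lie in $\mathcal{C}_{easy,n}$---exhibiting the required separating $S_n$-module. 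The only genuinely nontrivial step is the compatibility of the two $S_n$-actions in the first paragraph, which is routine in the monomial basis; otherwise the lower bound itself enters the argument only as a black box, so there is no real obstacle to overcome.
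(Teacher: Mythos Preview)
Your proposal is correct and follows essentially the same approach as the paper: both arguments reduce to the observation that the space of Boolean functions on $n$ variables is finite, so $\mathcal{C}_{easy,n}$ is a finite (hence Zariski-closed, $S_n$-invariant) set, and then invoke Fact~\ref{fact:propertiesModulesGeneral}. You spell out more explicitly the identification with multilinear polynomials and the compatibility of the $S_n$-actions, and you work over a general $\F$ rather than $\F_2$, but the substance is the same; the appeal to Hilbert's Basis Theorem is harmless but unnecessary, since a finite set is already cut out by finitely many polynomials.
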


\begin{proof}
By Fact~\ref{fact:propertiesModulesGeneral} for $S_n$, we only need to argue that the complexity class is defined by test polynomials. As the space of Boolean functions on $n$ variables is finite, every property of $n$-variable Boolean functions is finite and hence defined by test polynomials over $\F_2$. 
\end{proof}

In particular, any nonuniform Boolean lower bound implies the existence of a separating module. 

Despite the fact that the above observation says that separating modules can be used without loss of generality for Boolean circuit lower bounds, we find this observation alone somewhat unsatisfying. However, as with the results of Razborov, Smolensky, and Grigoriev--Razborov over finite fields (see Section~\ref{sec:lowerboundsff}), we believe that many Boolean circuit lower bounds in fact yield separating modules in a very \emph{direct and natural} manner.

Even without having verified this for many known Boolean lower bounds, we can begin to argue why we expect this to be the case. By the discussion in Appendix~\ref{app:necessary}, it is reasonable to expect that lower bounds use properties $\Pi$ which are \emph{naturally} defined by some logical combination of the vanishing of some polynomials and the non-vanishing of other polynomials. We already know that the properties used \emph{can} be defined by the vanishing of some test polynomials; the key here is the \emph{naturality} (in the usual sense of the word, not the Razborov--Rudich sense).

Putting this logical combination into disjunctive normal form, $\Pi$ can be naturally expressed as a union of properties of the form $\Pi_{i} \backslash \Pi_{i}' = \Pi_i \cap \Pi_i'^c$, where each $\Pi_i$ and $\Pi_i'$ is defined by the vanishing of test polynomials and $\Pi_i'^c$ denotes the complement of $\Pi_i'$. Say $\Pi_i'$ is defined by the vanishing of the test polynomials $f_1(x_1, \dotsc, x_n) = \dotsb = f_k(\var{x}) = 0$. Then its complement is most naturally defined by the \emph{non-}vanishing of at least one of the $f_i$. However, the complement $\Pi_i'^c$ can also be defined by the \emph{vanishing} of the single polynomial $\prod_{i=1}^{k} (f_i(\var{x}) - 1)$. Furthermore, by applying $x_i^2 = x_i$, we may take the degree of this single polynomial to be at most $\min\{n, \sum_{i=1}^k \deg(f_i)\}$.

In terms of constructivity, we thus do not lose much by considering $\Pi_i'^c$ as being defined by the vanishing rather than non-vanishing of test polynomials: the single polynomial defining $\Pi_i'^c$ has low degree, and there is only one such polynomial, so the number of polynomials used to define the property also does not increase. 

\begin{rmk}
A similar idea works over any finite field $\F_q$: use $\prod_{0 \neq \alpha \in \F_q} (f_i(\var{x}) - \alpha)$ in place of $f_i(\var{x}) - 1$, reduce by $x_i^q = x_i$, and the resulting degree is at most $(q-1)\min\{n, \sum_i \deg(f_i)\}$.
\end{rmk}

One might argue that using $\prod (f_i(\var{x}) - 1) = 0$ rather than the non-vanishing of some $f_i$ is unnatural, or violates the technique or idea of the lower bound proof that used property $\Pi$. However, if this were really the case, then the lower bound proof would hold for the vanishing/non-vanishing of some $f_i$ as formal polynomials, and hence would work over fields larger than $\F_2$, and in particular would hold over the algebraic closure $\overline{\F}_2$. With the exception of the results mentioned in Section~\ref{sec:lowerboundsff}, we are not aware of Boolean lower bounds that extend to any such fields. In this sense, the use of finiteness in Observation~\ref{obs:boolean} seems less of a kludge to us, and more an essential feature of the current techniques for Boolean circuit lower bounds.

\subsection{Other lower bounds?}
Although we have obviously not considered \emph{all} known lower bounds, we have considered quite a wide cross-section of them in this paper. Of the lower bounds which we actively tried to fit into this framework but have not yet been able to do so, most use heavily machine-based diagonalization. For example, the (non)deterministic time and space hierarchies \cite{hartmanisStearns,cookHierarchy}, uniform lower bounds on the permanent \cite{allenderPerm, allenderGore, koiranPerifelPerm}, time-space trade-offs for $\lang{SAT}$ \cite{fortnow, FLvMV, DvMW, williamsTS, williamsTS2, bussWilliams}, $\cc{\Sigma_2 P} \cap \cc{\Pi_2 P} \not\subseteq \cc{SIZE}(n^k)$ \cite{kannan} and the related result $\cc{MA}_{\cc{EXP}} \not\subseteq \cc{P/poly}$ \cite{BFTMAEXP}. 

\begin{rmk}
Although from one viewpoint Kannan's result rests crucially on the nonuniform circuit-size hierarchy---a counting argument---for the purposes of this discussion the key fact he shows is that a \emph{uniform} $\cc{\Sigma_4 P}$-machine is powerful enough to use the circuit-size hierarchy to diagonalize against $\cc{SIZE}(n^k)$. The same remark applies to the result $\cc{MA}_{\cc{EXP}} \not\subseteq \cc{P/poly}$, as it uses Kannan's result in an essential way.
\end{rmk}

The recent lower bound $\cc{NEXP} \not\subseteq \cc{ACC}^0$ \cite{williams} provides an interesting crucible. It is a nonuniform lower bound against a permutation-invariant Boolean complexity class, hence by Observation~\ref{obs:boolean} there \emph{exists} a separating $S_n$-module proving $\cc{NEXP} \not\subseteq \cc{ACC}^0$. However, the proof uses the nondeterministic time hierarchy in a seemingly crucial way. Extracting a \emph{natural} separating module from Williams's proof may be a first step towards extending the representation-theoretic framework to include uniform lower bounds.

One very interesting technique which we have not yet been able to fit into the representation-theoretic framework and which is only partially uniform comes from Jansen and Santhanam \cite{jansenSanthanam, jansenSanthanamSuccinct}. The key property they use is the existence of $\Z$ hitting sets whose bit descriptions can be encoded by small uniform (or at least succinct \cite{jansenSanthanamSuccinct}) circuits. This combination of algebraic (hitting sets) and Boolean (bit descriptions) frameworks in the same breath makes it difficult to even formulate their proofs in a single algebraic setting, let alone translate them into separating modules.

Finally, Shannon's counting argument \cite{shannon} also seems difficult to put into this representation-theoretic framework. Again, by Observation~\ref{obs:boolean} there exists a separating $S_n$-module for this lower bound. However, finding a natural separating module seems difficult, as Shannon counts the functions in $\mathcal{C}_{easy}$ ($\cc{SIZE}(2^n / n)$ in this case), rather than using some property shared by these functions. This is not necessarily a weakness of the framework however: one of the messages we take from Razborov and Rudich \cite{razborovRudich} is that such simple counting arguments cannot work to prove the strong lower bounds we desire. Indeed, Kadish and Landsberg \cite{kadishLandsberg} point out that getting a lower bound on the determinantal complexity of a generic polynomial is an important first step towards new lower bounds for permanent versus determinant; a lower bound on generic polynomials remains open.

\subsection{Finite fields and positive characteristic} \label{sec:finflds}
There is a mismatch between the current lower bounds over finite fields and the standard techniques of algebraic geometry. The issue is that all the current lower bounds over finite fields that we are aware of depend crucially not just on positive characteristic, but on the size of the field. This means that none of the current lower bounds over finite fields extend to the algebraic closure $\overline{\F}_p$. This is in contrast to the usual approach to finite fields in algebraic geometry, which is (roughly) to first work over their algebraic closures $\overline{\F}_q$ where algebraic geometry and representation theory are nicer and then to pass to the $\F_q$ points\footnote{The $\F_q$-points can be recovered from $\overline{\F}_q$ as the fixed points of the Frobenius map $x \mapsto x^q$, just as $\R$ points can be recovered from $\C$ as the fixed points of the complex conjugation map. The dynamics of the Frobenius map are often very useful.}. In particular, over $\overline{\F}_q$ Hilbert's Nullstellensatz holds and every matrix admits an eigenvector. This process is exactly analogous to (but more complicated than) considering complex solutions, eigenvectors, etc. in order to study equations, matrices, etc. over $\R$. 

As we already mentioned, even if the characteristic is held constant but the field size is allowed to grow at a modest pace with the size of the input, the current lower bounds seem to disappear completely. The essential issue here seems to be that the method of approximations is typically used to ``throw away'' points which are in the \emph{complement} of an algebraic set. Over finite fields, one then argues that these ``erroneous points'' are not too numerous, but over any infinite field, \emph{almost all} points will be ``erroneous,'' as an algebraic set has dimension strictly smaller than that of the ambient space.

It thus seems to us that the limits of our knowledge are not so much in finding lower bounds for depth 3 arithmetic circuits in characteristic zero, as is often stated, but for finding lower bounds for depth 3 arithmetic circuits over any given infinite field, including $\overline{\F}_p$. The chasm at depth 4 \cite{AV, koiranChasm} holds over an arbitrary field, but these observations lead us to wonder:

\begin{open} Is there a chasm at depth 3 over the algebraically closed field $\overline{\F}_p$ for any constant prime $p > 0$?
\end{open}

The current chasm at depth 3 \cite{GKKSchasm} only seems to work in characteristic zero or over a field of (growing) characteristic greater than the degree $d$ of the polynomial, as they use a trick of Fischer \cite{fischer} which requires dividing by $2^{d-1} d!$. 

\section{Most arithmetic circuit lower bounds yield separating modules} \label{sec:lowerbounds}
\newcommand{\lowerbound}[6]{
\vspace{12pt}
\noindent \textbf{#1}

\noindent \textit{Hard function: } #2

\noindent \textit{Complexity class: } #3

\noindent \textit{Lower bound: } #4

\noindent \textit{Invariance: } #5

\noindent \textit{Separating module: } #6 \qed
}

\newcommand{\implication}[5]{
\vspace{12pt}
\noindent \textbf{#1}

\noindent \textit{Assumption: } #2

\noindent \textit{Consequence: } #3

\noindent \textit{Invariance: } #4

\noindent \textit{Separating module implication: } #5 \qed
}

In this section we show how all of the bounds listed in the introduction give separating modules. Rather than recalling all of these proofs and stating a separate proposition for the existence of a separating module for each of these bounds (as in Section~\ref{sec:ex0}), we use a more concise format. Furthermore, we have not included all the results from every paper we consider, but only a representative result from each paper (or sometimes, from each technique). However, we believe that the other results in these papers and using these techniques also yield separating modules.

\subsection{Methods based on partial derivatives}
\ 

\lowerbound{Nisan--Wigderson partial derivatives}
{Elementary symmetric function $e_{n/4,n}$}
{Homogeneous depth 3 arithmetic circuits in characteristic zero}
{Size $2^{\Omega(n)}$ \cite{nisanWigderson}}
{$\F$-linear ($\GL_{n}(\F)$), characteristic zero}
{The $(r+1) \times (r+1)$ minors of the partial derivative matrix $M_f$, as in the proof of Proposition~\ref{prop:example0}.}

\lowerbound{Permanent versus depth 4}
{$\perm_n$}
{Depth 4 $\Sigma \Pi \Sigma \Pi$ arithmetic circuits with bottom fan-in $O(\sqrt{n})$}
{Size $2^{\Omega(\sqrt{n})}$ \cite{GKKS}}
{$\C$-linear ($\GL_{n^2}(\C)$)}
{The outline of the proof of this lower bound is very similar to that for the Nisan--Wigderson lower bound above. However, the key property used here is slightly more complicated. Rather than considering the dimension of the space of partial derivatives $\partial(f)$, they consider the dimension of the space of \emph{shifted} partial derivatives, which are products of polynomials of some degree $\ell$ with the partial derivatives of $f$. Following their notation, we write $\partial^{=k}(f)_{\leq \ell}$ for the space of $k$-th order partial derivatives multiplied by polynomials of degree $\leq \ell$. As in the above case, we build a matrix $\tilde{M}_f$ whose rank is exactly the dimension of $\partial^{=k}(f)_{\leq \ell}$, and then the $r \times r$ minors of this matrix provide the separating module, for appropriately specified $r$, $k$, and $\ell$.

As above, the columns of $\tilde{M}_f$ will be indexed by monomials $\var{x}^e$, and the rows will be indexed by pairs $(\var{x}^d, \partial^{c})$ of a monomial and a partial derivative operator. (Here $c \in \Z_{\geq 0}^n$, and $\partial^c$ denotes $\partial / \partial x_{1}^{c_1} \dotsb \partial x_{n}^{c_n}$.) Then we proceed as in the above case.
}

For this next result, we need a basic fact about sets of polynomials. Given two test modules $V$ and $W$, we define their product $V \cdot W$ as the linear span of the pairwise products of their elements: $V \cdot W = \{ \sum_{i} f_i g_i | f_i \in V, g_i \in W\}$. 

\begin{fact} \label{fact:product}
The disjunction (union) of two invariant properties defined by test polynomials is again an invariant property defined by test polynomials.
\end{fact}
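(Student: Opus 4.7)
The plan is to show that the product $V \cdot W$ defined immediately above the statement serves as the witness: if the test $G$-modules $V$ and $W$ define the invariant properties $\Pi_V$ and $\Pi_W$ respectively (as in Fact~\ref{fact:propertiesModules}), then $V \cdot W$ is itself a test $G$-module whose common zero locus is $\Pi_V \cup \Pi_W$. The argument reduces to two short checks.

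First I would verify that the common vanishing locus of $V \cdot W$ equals $\Pi_V \cup \Pi_W$. If $f \in \Pi_V$, then $p(f) = 0$ for every $p \in V$, so every spanning product satisfies $(pq)(f) = p(f) q(f) = 0$; symmetrically for $f \in \Pi_W$. Hence $\Pi_V \cup \Pi_W$ is contained in the zero locus of $V \cdot W$. Conversely, if $f$ lies outside both $\Pi_V$ and $\Pi_W$, then there exist $p \in V$ and $q \in W$ with $p(f), q(f) \neq 0$; the product $pq \in V \cdot W$ then satisfies $(pq)(f) \neq 0$, so $f$ is not in the zero locus.

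Second I would verify that $V \cdot W$ is a test $G$-module. Finite-dimensionality is immediate from $\dim(V \cdot W) \leq (\dim V)(\dim W)$. For $G$-invariance, the key observation is that the action of $B \in G$ on test polynomials, namely $p \mapsto p \circ \Coeff_B$, is a ring homomorphism on the polynomial ring in the coefficient variables $a_e$. Hence $B \cdot (pq) = (B \cdot p)(B \cdot q)$, which lies in $V \cdot W$ because $V$ and $W$ are themselves $G$-modules; linearity extends this from spanning products to arbitrary elements of $V \cdot W$.

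The only step that could conceivably pose an obstacle is compatibility of $\Coeff_B$ with polynomial multiplication, but this is immediate from the definition of $\Coeff_B$ as a (linear) substitution on the coefficient variables $a_e$. Together, these two checks exhibit $V \cdot W$ as a test $G$-module whose zero locus is $\Pi_V \cup \Pi_W$, so the latter is an invariant property defined by test polynomials, as claimed. Equivalently, one could invoke Fact~\ref{fact:propertiesModules} directly, since $\Pi_V \cup \Pi_W$ is visibly $G$-invariant and we have exhibited a defining family of test polynomials.
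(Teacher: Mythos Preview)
Your proof is correct and follows essentially the same approach as the paper: both use $V \cdot W$ as the witnessing test module, verify that it is a test $G$-module, and check that its zero locus is $\Pi_V \cup \Pi_W$ via the same two-direction argument. You actually supply more detail than the paper, which simply asserts ``First one verifies that $V \cdot W$ is a test module'' without spelling out the finite-dimensionality and $G$-invariance checks.
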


\begin{proof}
Let $V, W$ be test modules. First one verifies that $V \cdot W$ is a test module. Then $V \cdot W$ defines the union of the properties defined by $V$ and $W$. For let $f$ be an input polynomial. If every test polynomial $t \in V$ vanishes at $f$, then so does every test polynomial in $V \cdot W$. Similarly for $W$. Conversely, if some test polynomial $t_1 \in V$ does not vanish at $f$, and some test polynomial $t_2 \in W$ does not vanish at $f$, then $t_1 t_2 \in V \cdot W$ does not vanish at $f$.
\end{proof}

\lowerbound{Multilinear formulas}
{$\det_n$ or $\perm_n$}
{(Syntactic) multilinear formulas in characteristic zero}
{Size $\Omega(n^{\log n})$ \cite{razPerm}}
{permutation ($S_n)$}
{Raz combines the above ideas on $\dim \partial(f)$ with random restrictions, making the separating module here slightly more complicated than in the above examples. Raz explicitly defines a matrix of partial derivatives, similar to that in the above two examples, which he also denotes $M_f$. The random restrictions Raz uses (see \cite[\S 5]{razPerm}) take the form $\rho(x_i, x_j, x_k, x_\ell) = (1,1,y_m, z_m)$, where the $i,j,k,\ell$ used are of a particular form, and the image may be re-ordered in one of two possible ways. In particular, for each input length $n$ there are only finitely many such restrictions to consider. 

He then shows a lower bound on $\rk M_{\det(\rho(X))}$ and $\rk M_{\perm(\rho(X))}$ under \emph{any} such restriction $\rho$, and using a probabilistic argument shows that there \emph{exists} a restriction making $\rk M_{f(\rho(X))}$ small when $f$ is computed by a multilinear formula of size $n^{o(\log n)}$. Hence the property he is using is that there \emph{exists} a restriction $\rho$ as in his \S 5 which makes $\rk M_{f(\rho(X))} \leq r$ for appropriately chosen $r$. 

For a given restriction $\rho$, we get a test $S_n$-module $V_{\rho}$ consisting of the $(r+1) \times (r+1)$ minors of $M_{f(\rho(X))}$. This test $S_n$-module vanishes if and only if $\rk M_{f(\rho(X))} \leq r$. The separating module is then the product over all (finitely many) $\rho$ of the $V_{\rho}$ (cf. Fact~\ref{fact:product}).
}

\begin{rmk}
Although bounding the rank of a matrix of partial derivatives is linear-invariant, the property of being multilinear is \emph{not} linear-invariant, though it is permutation-invariant. Hence, despite using a bound on the dimension of partial derivatives,  it was to be expected that at some point in the proof a property would be used that was only permutation-invariant and not linear-invariant. Although Raz uses multilinearity elsewhere in his proof, even in the brief outline above we see that the type of random restrictions used is only permutation-invariant, and not linear-invariant.
\end{rmk}

\subsection{Methods using properties of (semi-)algebraic varieties} \label{sec:varieties}
For methods such as the degree bound \cite{strassenDegree, baurStrassen} and the connected components technique \cite{benOr}, the most natural input objects to use are themselves (semi-)algebraic varieties. In other words, we need to replace the input space $\Poly^d(\var{x})$ with a space whose points correspond to varieties. Such spaces have been constructed in (semi-)algebraic geometry, but their construction is not as elementary as in the above results. In both cases the basic idea is that the input objects will in fact be systems of equations (which, in turn, define algebraic sets), and the test variables are then the coefficients of these systems of equations. 

Surprisingly, the use of these ``parameter spaces of algebraic sets'' makes putting these results into the representation-theoretic viewpoint technically more complicated than the above results, despite the fact that these bounds were discovered considerably earlier. 

\lowerbound{The degree bound}
{Computing all elementary symmetric functions $e_{1,n}, \dotsc, e_{n,n}$ together}
{Arithmetic circuits over an infinite field}
{Size $\Omega(n \log n)$ \cite{strassenDegree}}
{$\F$-affine ($\AGL_{n}(\F)$), $\F$ infinite}
{The key property used here is the degree of a projective algebraic set. Although the degree has a nice geometric definition (in characteristic zero), here we recall the algebraic definition as it lends itself more readily to the definition of the separating module. Let $V$ be an algebraic subset of $\PP(\F^{n})$, and let $I \subseteq \F[x_{1}, \dotsc, x_{n}]$ be the homogeneous ideal of all polynomials that vanish on $V$. In particular, $I$ can be written as the direct sum $\bigoplus_d I_d$ of its homogeneous subsets $I_d$, which consist of those polynomials in $I$ of degree exactly $d$. The \emph{Hilbert function} of $I$ is then $h_I(d) \defeq \dim_{\F} I_d$. Hilbert showed (see, \eg, \cite[\S 9.3]{CLO} or \cite[Thm.~1.11]{eisenbud}) that for all sufficiently large $d$, $h_I(d)$ agrees with a polynomial $p_I(d)$, which is referred to as the \emph{Hilbert polynomial} of $I$ or $V$. The \emph{degree} of $V$ is then the leading coefficient\footnote{In an unfortunate twist of terminological fate, it turns out that the dimension of $V$ in the usual sense is equal to the degree of its Hilbert polynomial.} of the Hilbert polynomial $p_I(d)$.

For the input space, we may use either the Chow variety (see, \eg, \cite[Ch.~3, \S 7]{danilov}) or the Hilbert scheme (see, \eg, \cite{grothendieckHilbert}). The Chow variety is essentially the ``space of projective algebraic sets,'' and the Hilbert scheme is essentially the ``space of homogeneous ideals in $\F[x_1, \dotsc, x_n]$.'' The Chow variety is in fact a disjoint union over pairs $(d, D)$ of the variety of projective algebraic sets of degree $d$ and dimension $D$. Similarly, the Hilbert scheme is the disjoint union over Hilbert polynomials $p(\cdot)$ of the scheme of homogeneous ideals with Hilbert polynomial $p_I = p$. In either case, showing that two varieties have different degrees then amounts to showing that these varieties, as points in the space of varieties, live in different connected components of the Chow variety or Hilbert scheme. 

Finally, being in a given component of a variety (or scheme) is defined by the vanishing of some (test) polynomials. As the Hilbert polynomial, and in particular the degree and dimension, is an affine invariant of a projective algebraic set, the components of the Chow variety and Hilbert scheme are also affine-invariant. Hence, by the analog of Fact~\ref{fact:propertiesModules} for affine invariance, there is a separating module.
}

\lowerbound{Algebraic decision trees for sorting}
{Element distinctness (note that element distinctness reduces to sorting)}
{Real semi-algebraic decision trees}
{Depth $\Omega(n \log n)$ \cite{benOr}}
{$\R$-affine ($\AGL_{n}(\R)$)}
{The key property used here is the number of connected components of a semi-algebraic variety---that is, a subset of $\R^n$ defined by a collection of polynomial equalities and inequalities. The number of connected components is clearly affine-invariant; we recall here how Hardt's Triviality Theorem implies that it is in fact defined by a collection of test polynomial equalities and inequalities. The use of inequalities here is unavoidable: see Remark~\ref{rmk:inequalities} below.

A special case of Hardt's Triviality Theorem \cite{hardt} (see, \eg, \cite[\S 5.8]{basuPollackRoy} for a textbook treatment) says that for any continuous semi-algebraic map $\pi \colon S \to \R^N$ from a semi-algebraic set $S \subseteq \R^n$, there is a finite partition of $\R^N$ into semi-algebraic sets $\R^N = \bigcup_{i=1}^{k} T_i$ such that for each $i$ and every $x \in T_i$, $T_i \times \pi^{-1}(x)$ is semi-algebraically homeomorphic to $\pi^{-1}(T_i)$. In particular, this implies that for each $i$, if $x, y \in T_i$ then $\pi^{-1}(x)$ and $\pi^{-1}(y)$ have the same number of connected components.

Now, consider a collection of polynomial equalities and inequalities of degree $\leq d$ in $n$ variables $x_1, \dotsc, x_n$:
\begin{equation} \label{eqn:semialgebraic}
\begin{array}{rclcrcl}
\sum_{e} a_{1,e} \var{x}^{e} + a_{1} & = & 0, & \dotsc, & 
\sum_{e} a_{m,e} \var{x}^{e} + a_{m} & = & 0 \\
\sum_{e} a_{m+1,e} \var{x}^{e} + a_{m+1} & \geq & 0, & \dotsc, & 
\sum_{e} a_{m+s,e} \var{x}^{e} + a_{m+s} & \geq & 0 \\
\sum_{e} a_{m+s+1,e} \var{x}^{e} + a_{m+s+1} & > & 0, & \dotsc, & 
\sum_{e} a_{h,e} \var{x}^{e} + a_{h} & > & 0
\end{array}
\end{equation}
We may consider the $a_{i,e}$s and $a_{i}$s as variables rather than constants; suppose in total there are $N$ such variables. Then the $x_i$s are coordinates on $\R^n$ and the $a_{i,e}$s are coordinates on $\R^N$. Equations (\ref{eqn:semialgebraic}) thus define a semi-algebraic subset $S \subseteq \R^n \times \R^N$. Let $\overline{\pi}\colon \R^n \times \R^N \to \R^N$ be the projection onto the second factor, and let $\pi\colon S \to \R^N$ be the restriction of $\overline{\pi}$ to $S$. For any given numerical values $\var{a} \in \R^N$, let $V_{\var{a}} \subseteq \R^n$ denote the semialgebraic subset defined by (\ref{eqn:semialgebraic}). Then $\pi^{-1}(\var{a}) = V_{\var{a}} \times \{ \var{a} \} \cong V_{\var{a}}$ (where $\cong$ here denotes semialgebraic homeomorphism).

Finally, by Hardt's Triviality Theorem, there is a semialgebraic partition $\R^N = \bigcup_{i=1}^{k} T_i$ such that for any $\var{a}$ and $\var{a}'$ in the same $T_{i}$, $V_{\var{a}}$ and $V_{\var{a}'}$ have the same number of connected components. Hence, the collection of equations of the form (\ref{eqn:semialgebraic}) that define a semialgebraic variety with $c$ connected components is the semi-algebraic set $\bigcup \{ T_i | \pi^{-1}(\var{a}) \text{ has $c$ connected components for all } \var{a} \in T_i \}$. As the property of having $c$ connected components is invariant under affine transformations of the $x_i$s ($\AGL_n(\R)$), this union of $T_i$s is also affine-invariant (under the induced action of the same $\AGL_n(\R)$, not under the larger $\AGL_N(\R)$), and hence is defined by some affine-invariant collection of equalities and inequalities (by an analog of Fact~\ref{fact:propertiesModules}).
}

\begin{rmk} \label{rmk:inequalities}
The use of inequalities here is necessary. The vanishing of some test polynomials would not suffice, even when the semi-algebraic variety is defined only by equalities. This can be seen even in the simple case of the number of connected components defined by a quadratic: over $\R$ the number of connected components of the algebraic set $\{x \in \R | ax^2 + bx + c = 0\}$ is zero if and only if $b^2 - 4ac < 0$ and is at most one if and only if $b^2 - 4ac \leq 0$. The set $\{(a,b,c) | b^2 - 4ac < 0\}$ is not defined by the vanishing of some polynomials, for it has dimension 3, but the only 3-dimensional subset of $\R^3$ defined by the vanishing of polynomials is $\R^3$ itself. Hence inequalities are necessary.
\end{rmk}

\begin{rmk} 
Note that the above lower bound implies the same lower bound for decision trees for element distinctness (and sorting) over $\C$. However, over $\C$ the connected components argument does not work directly, because semi-algebraic varieties over $\C$ tend to have fewer connected components than over $\R$. In particular, the semialgebraic variety corresponding to element distinctness over $\C$ has just a single connected component. Hence although the lower bound holds over $\C$, we would still only get a separating $\AGL_n(\R)$-module. 
\end{rmk}

\lowerbound{Algebraic decision trees for $k$-equals}
{$k$-equals (are at least $k$ of the inputs equal?)}
{Real semi-algebraic decision trees}
{Depth $\Omega(n \log (n/k))$ \cite{yao97}}
{$\R$-affine ($\AGL_{n}(\R)$)}
{The key property used here is a lower bound on \emph{any} Betti number, rather than just the number of connected components (=the $0$-th Betti number). As the Betti numbers are invariant under homeomorphism, essentially the same argument as above using Hardt's Triviality Theorem works for this result.
}

\subsection{The method of approximations over finite fields} \label{sec:lowerboundsff}
Here we give two representative examples of how results that use the method of approximation for circuits over finite fields yield separating modules. Results using similar properties, such as those of Grigoriev--Razborov \cite{GR}, should similarly yield separating modules.

\lowerbound{Razborov--Smolensky}
{$\text{MOD}_3$}
{$\cc{AC}^0[2]$}
{Exponential size \cite{razborov, smolensky}}
{$\F_2$-affine ($\AGL_{n}(\F_2)$)}
{Every $\cc{AC}^0[2]$ circuit computes a polynomial function over $\F_2$, so we use $\Omega^{d,n}_{\F_2} \defeq \Poly^d_{\F_2}(x_1, \dotsc, x_n) / \langle x_1^2 = x_1, \dotsc, x_n^2 = x_n \rangle$ as the space of input functions (using $\Omega$ we follow Smolensky's notation). Note that here we consider two functions equal if they are equal when evaluated on all $\F_2$ points. In other words, we are considering \emph{functions} on $\F_2$, rather than formal polynomials whose coefficients are in $\F_2$. Every function over $\F_2$ can be represented by a unique multilinear polynomial; when we refer to $\text{MOD}_3$ we mean its corresponding $\F_2$-multilinear polynomial. 

Fix a depth $k$ and a constant $\lambda$. For our purposes, the key property used here is:
\begin{quotation}
There exists a subset $\Gamma \subseteq \F_2^n$ (for ``good'') of size at least $2^n - 2^{n-r}$ such that $f$ agrees with a polynomial of degree $\leq (2\lambda r)^k$ on the points in $\Gamma$. 
\end{quotation}
Smolensky \cite[Lem.~2]{smolensky} shows that this holds for any function computed by a depth $k$ circuit with parity gates for $r=o(n^{1/2k})$, but not for $\text{MOD}_3$. This condition is clearly $\GL_n(\F_2)$-invariant.

For any $\Gamma \subseteq \F_2^n$, let $I_{\Gamma}$ be the ideal of polynomials that vanish everywhere on $\Gamma$. When we mod out the space of functions by $I_{\Gamma}$, this is the same as only considering the values a function takes on $\Gamma$. Then $f$ agrees with a polynomial of degree $\leq d=(2\lambda r)^k$ on the points in $\Gamma$ if and only if all of the coefficients of monomials of degree $> d$ of $f \pmod{I_{\Gamma}}$ vanish. As the map $\Omega^{d,n} \to \Omega^{d,n} / I_{\Gamma}$ is linear, the coefficients of $f \pmod{I_{\Gamma}}$ are linear combinations of the coefficients of $f$, and we are asking that certain such linear combinations vanish. Let $T_{\Gamma}$ be the test module consisting of these linear combinations. Finally, for an appropriate choice of $r$, by Fact~\ref{fact:product}, $\prod_{\Gamma} T_{\Gamma}$ is the desired separating module, where the product is taken over all (finitely many) subsets $\Gamma \subseteq \F_2^n$ of size $\geq 2^n - 2^{n-r}$.
}

\lowerbound{Depth 3 arithmetic circuits over finite fields}
{Determinant or permanent}
{Depth 3 arithmetic circuits over the finite field $\F_{q}$}
{Exponential size \cite{GK}}
{$\F_{q}$-linear ($\GL_{n}(\F_{q})$)}
{As above, the key property here will use an existential quantifier over some finite collection of subsets $S$ of $\F_q^n$, which will turn into a big product of test modules over all possible choices for $S$. Beyond that, the condition here is quite a bit more complicated than above.

Here, we work in the space of formal polynomials over $\F_q$, namely $\Poly^d_{\F_q}(x_{11}, x_{12}, \dotsc, x_{nn})$. To describe the key property we introduce some notation. Given $\sigma \in \GL_n(\F_q)$ and any function $f = f(X)$, we denote $f(\sigma X)$ by $f^{\sigma} = f^{\sigma}(X)$. For any set $F$ of functions, write $F^{\sigma} = \{f^{\sigma} | f \in F\}$. Let $\partial^{\leq r}(f)$ denote the linear span of all the partial derivatives of $f$ of order $\leq r$. Finally, combining these notations, we have $\partial^{\leq r}(f)^{\sigma} = \{ g^\sigma | g \in \partial^{\leq r}(f)\}$. 

The key property of a function $f \in \Poly^d_{\F_q}(x_1, \dotsc, x_n)$ is then, for appropriate choices of all the parameters involved, that there exists a subset $S \subseteq \GL_n(\F_q)$ of size $\leq s$ such that
\begin{quotation}
there is a function $g(X)$ in the intersection $\bigcap_{\sigma \in S} \partial^{\leq r}(f)^{\sigma}$ such that $g(A) = 0$ for all $A \in \GL_n(\F_q)$.
\end{quotation}
Again, this property is readily seen to be $\GL_n(\F_q)$-invariant. Let us verify that it is defined by test polynomials. For now, fix a subset $S \subseteq \GL_n(\F_q)$. For each $\sigma \in S$, we compute a linear basis of $\partial^{\leq r}(f)^{\sigma}$. The coefficients of each such basis function will be linear combinations of the coefficients of $f$ (=test variables). This follows from the usual fact about partial derivatives, and the fact that for any $\sigma \in \GL_n(\F_q)$ and any function $h$, the coefficients of $h^{\sigma}$ are linear combinations of the coefficients of $h$. Next, we take the intersection over all $\sigma \in S$ of these subspaces. Again, a linear basis for the resulting intersection will consist of polynomials whose coefficients are linear combinations of the test variables. Let us denote this intersection $\Lambda$. 

Now observe that the collection of all $g$ such that $g(A) = 0$ for all $A \in \GL_n(\F_q)$ is an ideal $I$ in the space of polynomials (of degree $\leq d$ for some $d$), and in particular is a linear subspace thereof. Then the property is satisfied exactly if $I \cap \Lambda \neq 0$. The system of linear equations defining $I \cap \Lambda$ has coefficients which are either linear combinations of the coefficients of $f$ (coming from the equations defining the linear space $\Lambda$) or constants (coming from the equations defining $I$). If this system of equations had the same number of variables as equations we could require that just the $n \times n$ determinant of the system vanishes. As the system is likely to have more equations than variables, we must require that all the $n \times n$ minors of this system vanish. These $n \times n$ minors form a test module $T_S$, and then, as above, the separating module is $\prod_S T_S$, where the product is over all $S$ of appropriate size.
}

\begin{rmk}
Aside from the more obvious uses of finiteness (not just finite characteristic) in the above proofs, in the Grigoriev--Karpinski proof, the property they use becomes vacuous over any infinite field $\F$: the only polynomial in $n^2$ variables that vanishes everywhere on $\GL_n(\F)$ is the zero polynomial. For further discussion of these issues see Section~\ref{sec:finflds}.
\end{rmk}

\subsection{Results previously known to give separating modules}
\

\lowerbound{Permanent versus determinant}
{$\perm_n$}
{Linear projections of $\det_m$}
{$m \geq n^2/2$ \cite{mignonRessayre}; also border determinantal complexity $n^2 / 2$ \cite{LMR}}
{$\C$-linear ($\GL_{m^2}(\C)$)}
{The key property used by Mignon and Ressayre \cite{mignonRessayre} is the rank of the Hessian matrix of a function. Recall that the Hessian of a function $f(x_1, \dotsc, x_n)$ is the $n \times n$ matrix $\Hess(f)$ whose $(i,j)$ entry is the second partial derivative $\partial^2 f / \partial x_i \partial x_j$. They show a lower bound on $\rk \Hess(\perm)$ and an upper bound on $\rk \Hess(\det)$. Note that the entries of $\Hess(\det)$ are themselves functions; the upper bound on $\rk \Hess(\det)$ that they prove does not hold at all matrices $X$, but only at those matrices where $\det(X) = 0$. This is enough for them to prove the lower bound, but makes it complicated to extract a separating module from their proof. 

If the upper bound held for all $X$, then the minors of the Hessian matrix would span a separating module, as in the Nisan--Wigderson partial derivatives technique above. Instead, the condition they use is that $\det(X)$ divides the $r \times r$ minors of $\Hess(\det)$ (for $r=2n+1$). Landsberg, Manivel, and Ressayre \cite{LMR} find polynomial equations that vanish exactly on the pairs of polynomials $(f, g)$ such that $f$ divides $g$ (amongst other achievements), resolving a surprisingly old question in algebraic geometry. They then construct a separating module by using these equations with $f=\det$ and $g$ the minors of $\Hess(\det)$.}

\lowerbound{Matrix multiplication}
{$n \times n$ matrix multiplication}
{Bilinear circuits in characteristic zero}
{Border rank $\geq \frac{3}{2} n^2 - o(n^2)$ \cite{BI}}
{$\F$-linear ($\GL_{n^2}(\F) \times \GL_{n^2}(\F) \times \GL_{n^2}(\F)$, characteristic zero}
{B\"{u}rgisser and Ikenmeyer \cite{BI} explicitly construct separating modules yielding this lower bound (in fact, they construct ``occurrence obstructions,'' see Section~\ref{sec:gct} below for the definition).}

\section{Relations between lower bounds yield relations between separating modules} \label{sec:relations}

\implication{Baur--Strassen: computing partial derivatives \cite{baurStrassen}}
{Computing $(\partial f/\partial x_{1}, \dotsc, \partial f / \partial x_{n})$ requires arithmetic circuits of size $s$}
{Computing $f$ requires arithmetic circuits of size $s/3$}
{$\F$-linear ($\GL_n(\F)$), any infinite field}
{Let $\varphi$ be the map from $\Poly^d(x_1, \dotsc, x_n)$ to the Chow variety or Hilbert scheme (see The Degree Bound above), defined as follows. $\varphi(f)$ is the variety (ideal) defined by $\langle \partial f / \partial x_1, \dotsc, \partial f / \partial x_n \rangle$. Recall that $A \in \GL_n(\F)$ acts on the Hilbert scheme by taking the ideal $\langle g_1(\var{x}), \dotsc, g_k(\var{x}) \rangle$ to the ideal $\langle g_1(A\var{x}), \dotsc, g_k(A\var{x}) \rangle$; let us denote the latter by $A \cdot \langle g_1(\var{x}), \dotsc, g_k(\var{x}) \rangle$. Similarly, $A \in \GL_n(\F)$ acts on $\Poly^d(\var{x})$ by sending $f(\var{x})$ to $f(A\var{x})$. Then $\varphi$ is $\GL_n(\F)$-equivariant, in that
\begin{eqnarray*}
\varphi(f(A\var{x})) & = & \left\langle \sum_{j} a_{1j} \left( \frac{\partial f}{\partial x_j}\right)(A\var{x}), \dotsc, \sum_{j} a_{nj} \left( \frac{\partial f}{\partial x_j} \right)(A\var{x}) \right\rangle \\
 & = & \left\langle \left( \frac{\partial f}{\partial x_1}\right)(A\var{x}), \dotsc, \left( \frac{\partial f}{\partial x_n}\right)(A\var{x}) \right\rangle \\
 & = & A \cdot \varphi(f(\var{x})).
\end{eqnarray*}
If $T$ is a test module which vanishes on $\{ \varphi(g) | \varphi(g) \text{ has arithmetic circuits of size } \leq s\}$, but not on $\varphi(f)$, then $\varphi_{*}(T) \defeq \{t \circ \varphi | t \in T \}$ is a vector space of test polynomials which vanishes at all $g \in \Poly^d(\var{x})$ that have circuits of size $\leq s/3$, but not at $f$. The $\GL_n(\F)$-equivariance of $\varphi$ implies that $\varphi_{*}(T)$ is in fact a test $\GL_n(\F)$-module.
}

\implication{Tensor rank to formula size \cite{razTensor}}
{$t_n \in (\F^{n})^{\otimes r(n)}$ has tensor rank $\geq n^{r(n)(1 - o(1))}$ for some $\omega(1) \leq r(n) \leq O\left(\frac{\log n}{\log \log n}\right)$}
{The polynomial $f_n$ which is the symmetrization of $t_n$ requires super-polynomial size arithmetic formulas. Also, by the completeness of the permanent, $\perm_n$ requires super-polynomial size arithmetic formulas (attributed to Yehudayoff, \cite[Footnote~2]{razTensor})}
{$\F$-linear ($\GL_{n}(\F)$), $\F$ arbitrary}
{Raz uses the standard symmetrization map from tensors $(\F^{n})^{\otimes r}$ (we think of these as degree $r$ homogeneous noncommutative polynomials) to $\Poly^r(x_1, \dotsc, x_n)$. In particular, to show an arithmetic formula size lower bound on some $f_n \in \Poly^r(\var{x})$, it suffices to show a tensor rank lower bound on \emph{any} noncommutative version $t_n$ of $f_n$ (that is, $f_n$ is the result of symmetrizing $t_n$). In particular, we are free to use the standard embedding (NB: in the opposite direction compared to the above) $\varphi\colon\Poly^r(\var{x}) \hookrightarrow (\F^n)^{\otimes r}$, which takes the monomial $x_{i_1} \dotsc x_{i_r}$ to the tensor $\frac{1}{r!}\sum_{\pi \in S_r} x_{i_{\pi(1)}} \otimes \dotsb \otimes x_{i_{\pi(r)}}$. Raz's results imply that the image, under $\varphi$, of the set of polynomials that have small formulas is contained in the set of tensors of low tensor rank. It is a standard fact from multi-linear algebra that the embedding $\varphi$ is $\GL_n(\F)$-equivariant (see the Baur--Strassen implication above). Hence, if a test module $T$ is used to show a lower bound on the tensor rank (and hence, border rank, see Section~\ref{sec:border}) of some $\varphi(f)$, then $\{ t \circ \varphi | t \in T\}$ is a test module which implies the stated lower bound on the arithmetic formula size of $f$.
}

\implication{Chasm at Depth 4 \cite{AV, koiranChasm}}
{$f$ requires depth 4 arithmetic circuits of size $2^{\omega(\sqrt{n}\log^2 n)}$}
{$f$ requires arithmetic circuits of super-polynomial size}
{$\F$-affine ($\AGL_{n}(\F)$), $\F$ arbitrary}
{They show that the set of functions computable by arithmetic circuits of polynomial size is contained in the set of functions computable by depth $4$ circuits of size $2^{O(\sqrt{n} \log^2 n)}$. Hence, if a separating module vanishes on the latter set, it also vanishes on the former. 

}

\implication{Chasm at Depth 3 \cite{GKKSchasm}}
{$f$ requires depth 3 arithmetic circuits of size $2^{\omega(\sqrt{n}\log^{3/2} n)}$}
{$f$ requires arithmetic circuits of super-polynomial size}
{$\F$-affine ($\AGL_{n}(\F)$), characteristic zero or characteristic $> \deg f$}
{Same as above, but with different bounds and not over arbitrary fields. See Section~\ref{sec:finflds} for a discussion of this issue.}

\implication{Matrix rigidity to linear circuits \cite{valiantRigidity}}
{The $n \times n$ matrix $A_n$ has rigidity $R_{A_n}(n/2) \geq \Omega(n^{1 + \varepsilon})$}
{The linear function $\var{x} \mapsto A_n\var{x}$ does not have linear circuits of simultaneous size $O(n)$ and depth $O(\log n)$}
{permutation ($S_{n} \times S_n$)}
{Here the ambient (input) space is the space $M_n(\F)$ of $n \times n$ matrices. Valiant \cite[Cor.~6.3]{valiantRigidity} showed the set of matrices $A_n$ whose associated linear functions $x \mapsto A_n x$ can be computed by linear circuits of size $O(n)$ and depth $O(\log n)$ (simultaneously) is contained in the set of matrices of low rigidity. Hence any test module which vanishes on the set of matrices with low rigidity but not on some matrix $A$ will also vanish on the set of matrices that can be computed in size $O(n)$ and depth $O(\log n)$ by linear circuits.

As the concept of rigidity involves the \emph{number of entries} of a matrix that must be changed to drop its rank, this concept is only permutation-invariant---we may multiply $A_n$ on the left and right by permutation matrices without affecting its rank or rigidity. We note that, despite the fact that the non-rigid matrices do not form an algebraic set, some of the most successful results on matrix rigidity to date use the algebro-geometric approach (essentially, test polynomials) \cite{lokamRigidity} (see also \cite{landsbergRigidity} for more on the geometry).
}

\section*{Acknowledgments}
The author would like to thank Scott Aaronson, Eric Allender, Saugata Basu, Tom Church, Klim Efremenko, Kaveh Ghasemloo, J. M. Landsberg, Ketan Mulmuley, Toni Pitassi, Peter Scheiblechner, Chris Umans, Alasdair Urquhart, Ryan Williams, and Yiwei She for useful discussions. In particular Williams suggested the terminology ``input polynomial vs. test polynomial,'' and Shipman and Church helped come up with the name ``separating module.'' The author also had several useful conversations with Shipman regarding the Hilbert scheme, used to show the degree bound fits into this framework. Basu pointed the author to Hardt's Triviality Theorem, used to show the connected components sorting lower bound fits into this framework. Discussions with Scheiblechner were useful for Section~\ref{sec:boolean}. The author had useful discussions with Ghasemloo and Pitassi regarding whether and how the Razborov--Smolensky result fits into this framework. The author would also like to thank Amir Yehudayoff for his wonderfully clear talk on multilinear lower bounds at Dagstuhl. Some of these conversations were facilitated by the Dagstuhl seminar on Algebraic and Combinatorial Methods in Computational Complexity in October 2012 (thanks to the organizers Manindra Agrawal, Thomas Thierauf, and Chris Umans for the invitation), and by the Exploring the Limits of Computation Tokyo Complexity Workshop in March 2013 (thanks to the organizers Osamu Watanabe and Jun Tarui for the invitation and support). This paper would not have been possible without many useful conversations with Ketan Mulmuley over several years. Toni Pitassi provided very helpful comments on a draft. The author would also like to thank Allan Borodin for his support, both material and otherwise, part of which came by way of Borodin's NSERC Grant \#482671.

\appendix
\section{Proof of the correspondence between invariant properties and test modules} \label{app:invariant}

\begin{fact}[Generalized restatement of Fact~\ref{fact:propertiesModules}] \label{fact:propertiesModulesGeneral}
Let $G$ be any finite or algebraic group acting algebraically on an input space. There is a many-to-one correspondence between test $G$-modules and $G$-invariant properties defined by the vanishing of test polynomials.
\end{fact}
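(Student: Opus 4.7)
The plan is to construct a map $\Phi$ from test $G$-modules to $G$-invariant properties, then verify surjectivity, and finally exhibit non-injectivity.

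First I would define $\Phi(T) \defeq \{ f : p(f) = 0 \text{ for all } p \in T\}$ and show $\Phi(T)$ is $G$-invariant. The basic identity---following from how the induced action on test polynomials is defined---is that $p(\Coeff_B(f)) = (p \circ \Coeff_B)(f)$ for every $B \in G$, $p \in T$, and input object $f$. Hence $\Coeff_B(f) \in \Phi(T)$ iff $(p \circ \Coeff_B)(f) = 0$ for all $p \in T$. Since $T$ is by definition closed under $p \mapsto p \circ \Coeff_B$, and this map is a bijection on $T$ (with inverse $p \mapsto p \circ \Coeff_{B^{-1}}$), this is equivalent to $q(f) = 0$ for all $q \in T$, i.e., $f \in \Phi(T)$. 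Thus $\Phi(T)$ is $G$-stable.

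Second, I would argue surjectivity of $\Phi$. Let $\Pi$ be a $G$-invariant property defined by the vanishing of test polynomials. By Hilbert's Basis Theorem, $\Pi$ is cut out by finitely many generators $p_1, \dotsc, p_k$ of the ideal of test polynomials vanishing on $\Pi$. Set
\[
T \defeq \operatorname{span}\{ p_i \circ \Coeff_B : 1 \leq i \leq k, \ B \in G\}.
\]
By construction $T$ is $G$-invariant. Finite-dimensionality is the key technical point: because each $\Coeff_B$ is a \emph{linear} automorphism of the coefficient space, the induced action on test polynomials preserves their degree in the coefficient variables $a_e$. Hence the orbit $G \cdot p_i$ lies in the finite-dimensional space of test polynomials of degree $\deg(p_i)$, and $T$ is finite-dimensional. (For algebraic $G$ this is the standard ``rational representations have finite-dimensional orbit spans'' fact; for finite $G$ it is trivial.) To see $\Phi(T) = \Pi$: since each $p_i \in T$, we get $\Phi(T) \subseteq \{f : p_i(f) = 0\ \forall i\} = \Pi$. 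Conversely, if $f \in \Pi$, then by $G$-invariance $\Coeff_B(f) \in \Pi$ for every $B$, so $(p_i \circ \Coeff_B)(f) = p_i(\Coeff_B(f)) = 0$; these elements span $T$, so $f \in \Phi(T)$.

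Finally, to show the correspondence is \emph{many-to-one} rather than bijective, I would exhibit distinct test modules with the same vanishing locus. Concretely, $T$ and $T + T \cdot T$ (the latter being the module spanned by $T$ together with all pairwise products from $T$, which is again a test module as in Fact~\ref{fact:product}) define the same property, since $p(f) = 0$ for all $p \in T$ iff the same holds for all products. Over an algebraically closed field a more invariant version of this observation---mentioned in the footnote following Fact~\ref{fact:propertiesModules}---is that any two modules generating ideals with the same radical yield the same property, by Hilbert's Nullstellensatz.

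The step I expect to be the main obstacle is the finite-dimensionality of $T$ in the surjectivity argument, since this is where the algebraicity of the $G$-action genuinely enters. For $\GL_n$, $S_n$, $\AGL_n$ acting via $\Coeff_B$ this is immediate from the degree-preserving nature of $\Coeff_B$, but to state the fact cleanly for an arbitrary algebraic group acting algebraically, one has to invoke the general principle that a rational representation of an algebraic group has the property that every orbit spans a finite-dimensional subrepresentation; the rest of the argument is essentially a bookkeeping exercise in unraveling the definitions.
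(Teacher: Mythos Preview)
Your proposal is correct and follows essentially the same route as the paper: define the vanishing-locus map, check $G$-invariance via the closure of $T$ under $p \mapsto p \circ \Coeff_B$, and for surjectivity take the $G$-span of a finite Hilbert-basis generating set, invoking finite-dimensionality of orbit spans for algebraic $G$. Your verification that $\Phi(T)=\Pi$ is slightly more direct than the paper's (which instead characterizes $\Pi_{GT}$ as the unique maximal $G$-invariant subset of $\Pi$), and you make the many-to-one claim explicit with the $T$ versus $T+T\cdot T$ example, but these are presentational rather than substantive differences.
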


For readers unfamiliar with algebraic geometry, we note that $\GL_n(\F)$ and $\AGL_n(\F)$ are both algebraic groups. All of the situations considered in this paper satisfy the hypotheses above. 

For readers familiar with algebraic geometry but perhaps not with algebraic groups: an algebraic group is an algebraic set that is also a group, and where the multiplication map $G \times G \to G$ and inversion map $G \to G$ are both algebraic maps. In particular, all finite groups are algebraic. (If you are concerned that $\GL_n$ is a Zariski-open subset of $\F^{n^2}$, consider $\GL_n$ as instead the algebraic set $\{(A, \frac{1}{\det A}) | A \text{ invertible}\} \subseteq \F^{n^2 + 1}$.) An action of $G$ on an algebraic set $V$ is algebraic if the action map $G \times V \to V$ is algebraic.

\begin{proof}
Let $V$ denote the input space (input polynomials, matrices, etc.), and suppose that $T$ is a test $G$-module with basis $t_1, \dotsc, t_k$. Let $\Pi_T$ denote the corresponding property, namely $\Pi_T = \{ v \in V | t(v) = 0 \forall t \in T\}$. $\Pi_T$ is defined by test polynomials (namely, those in $T$). To see that $\Pi_T$ is $G$-invariant, suppose that $v \in \Pi_T$ and $g \in G$, and consider the point $gv$. By the defining property of test $G$-module, if $t(\var{x}) \in T$, then $t(g\var{x}) \in T$ for all $g \in G$. Let $t'(\var{x}) = t(g\var{x})$. As $t' \in T$ and $v \in \Pi_T$, we have $t'(v) = 0$ by the definition of $\Pi_T$. But then $0=t'(v) = t(gv)$, as desired. Hence $\Pi_T$ is a $G$-invariant property defined by test polynomials.

Conversely, suppose that $\Pi \subseteq V$ is a $G$-invariant property defined by test polynomials. By Hilbert's Basis Theorem, $\Pi$ is defined by the vanishing of only finitely many test polynomials, say $t_1, \dotsc, t_k$. If $G$ is finite, then it is clear that the collection of polynomials $GT \defeq \{ t_i(g(\var{x})) | 1 \leq i \leq k, g \in G\}$ is finite. If $G$ is algebraic, then it is a standard fact from algebraic geometry that the linear span of $GT$ is finite-dimensional, even though $G$ itself may be infinite. It is clear from the construction that $GT$ is a test $G$-module. It remains to show that $\Pi$ is exactly the set of input points on which $GT$ vanishes. Let us denote the latter set by $\Pi_{GT}$. By the previous direction, $\Pi_{GT}$ is $G$-invariant.

We will show that for arbitrary $\Pi$ defined by test polynomials in $T$ (not necessarily $G$-invariant), $\Pi_{GT}$ is the unique maximum $G$-invariant subset of $\Pi$. Hence, if $\Pi$ itself is $G$-invariant, then $\Pi=\Pi_{GT}$. Suppose $\Pi'$ is a $G$-invariant subset of $\Pi$. In particular, every test polynomial $t \in T$ vanishes on every $v \in \Pi'$. We must show that for arbitrary $g$, $t(g\var{x})$ also vanishes on every $v \in \Pi'$. As $\Pi'$ is $G$-invariant, $v \in \Pi'$ implies that $gv \in \Pi'$ for every $g \in G$. Hence $t(gv) = 0$ for every $v \in \Pi'$. Thus $\Pi' \subseteq \Pi_{GT}$. As this holds for arbitrary $G$-invariant subsets $\Pi'$ of $\Pi$, $\Pi_{GT}$ is the unique maximum $G$-invariant subset of of $\Pi$, and thus is equal to $\Pi$ if $\Pi$ itself is $G$-invariant.
\end{proof}

It is clear that the map sending a test $G$-module $T$ to the property $\Pi_{T}$ is well-defined, and hence is at worst many-to-one. Over an algebraically closed field, Hilbert's Nullstellensatz implies that two test $G$-modules $T_1$ and $T_2$ define the same property $\Pi$ if and only if they generate the same radical ideal. Hence, we cannot expect this map to be one-to-one.

\section{The utility of separating modules} \label{app:necessary}
In Section~\ref{sec:invariant} we argued that invariant properties can be used to prove lower bounds without loss of generality. In Section~\ref{sec:constructible} we argued that for all naturally occurring nonuniform complexity classes $\mathcal{C}$, $\mathcal{C}_n$ is constructible, and furthermore is typically the image of some simple algebraic map from some $\F^N$. We now give a heuristic argument that the easiest way to prove a lower bound against such sets is by using a test polynomial, and hence, for invariant classes, a separating module. Even when the use of separating modules is not formally necessary, it thus helps illuminate any (constructible) nonuniform complexity class.

If $\mathcal{C}_n$ is closed, then test polynomials are necessary and sufficient to prove $f_{hard,n} \notin \mathcal{C}_n$ (see Section~\ref{sec:constructible}). For the sake of discussion, suppose that $\mathcal{C}_n$ is not closed, but is the next simplest kind of constructible set: $\mathcal{C}_n$ is the difference $\mathcal{A}_n \backslash \mathcal{B}_n$ of two closed sets $\mathcal{A}_n, \mathcal{B}_n$. By what is essentially disjunctive normal form, every constructible set is a union of such differences. 

Without loss of generality, we may assume that $\mathcal{A}_n = \overline{\mathcal{C}_n}$ is the Zariski-closure of $\mathcal{C}_n$, and that $\mathcal{B}_n \subseteq \mathcal{A}_n$. Equivalently, $\mathcal{B}_n = \overline{\mathcal{C}_n} \backslash \mathcal{C}_n$ is the boundary of $\mathcal{C}_n$. 

Two approaches to show $f_{hard,n} \notin \mathcal{C}_n$ immediately suggest themselves: (1) show that $f_{hard,n} \notin \mathcal{A}_n = \overline{\mathcal{C}_n}$; or (2) show that $f_{hard,n} \in \mathcal{B}_n$. As $\mathcal{B}_n = \overline{\mathcal{C}_n} \backslash \mathcal{C}_n$ might be complicated, a third approach is (3) to find a closed set $\mathcal{D}_n$ containing $f_{hard,n}$ such that $\mathcal{D}_n$ is disjoint from $\mathcal{C}_n$. Each of these approaches of course requires some insight: in general, (1) requires finding a test polynomial with the desired properties, (2) requires finding all test polynomials that vanish on $\mathcal{B}_n$, or at least a set of test polynomials whose vanishing defines $\mathcal{B}_n$, and (3) requires finding the set $\mathcal{D}_n$ along with all the test polynomials that vanish on $\mathcal{D}_n$, or at least a set of test polynomials that defines $\mathcal{D}_n$. Of course, we say ``in general'' here because it is always possible that, for example, $\mathcal{B}_n$ might have some structure that can be exploited so that showing $f_{hard,n} \in \mathcal{B}_n$ might be done without recourse to such test polynomials. However, at this level of heuristic argument, we cannot speculate on anything other than the general case.

In the general case---that is, barring some miraculous leap of ingenuity, which of course we cannot rule out---we can compare the \emph{a priori} difficulty of these approaches:
\begin{enumerate}
\item requires finding a single test polynomial $t$, verifying that $t$ vanishes on $\mathcal{C}_n$ (which implies that it vanishes on $\overline{\mathcal{C}_n} = \mathcal{A}_n$), and verifying that $t(f_{hard,n}) \neq 0$.

\item requires finding or knowing a set $t_1, \dotsc, t_k$ of test polynomials whose vanishing defines $\mathcal{B}_n$ and then verifying that $t_i(f_{hard,n}) = 0$ for all $1 \leq i \leq k$. 

\item requires constructing $\mathcal{D}_n$, along with a defining set $t_1, \dotsc, t_k$ of test polynomials, verifying that $t_i(f_{hard,n}) = 0$ for all $1 \leq i \leq k$, and verifying that $\mathcal{D}_n$ is disjoint from $\mathcal{C}_n$. 
\end{enumerate}

First, there is the obvious difference that (1) only requires finding a \emph{single} polynomial and verifying its properties, whereas both (2) and (3) require finding a whole set of polynomials and verifying their properties. Furthermore, in most such situations the number of polynomials needed in (2) and (3) will be exponential in $n$: in all the examples we are aware of except for Remark~\ref{rmk:matrices}, the sets $\mathcal{A}_n, \mathcal{B}_n, \mathcal{C}_n, \mathcal{D}_n$ have dimension $poly(n)$ and live in a space like $\Poly^{O(n)}(x_1, \dotsc, x_n)$ of dimension $2^{\Theta(n \log n)}$, which implies that any defining set of test polynomials must consist of at least $2^{\Theta(n \log n)} - poly(n) = 2^{\Theta(n \log n)}$ test polynomials.

\begin{rmk} \label{rmk:matrices}
In the case of $n \times n$ matrix multiplication the ambient space has dimension $n^6$, and in the case of matrix rigidity the ambient space has dimension $n^2$, so the above point is not an issue. However, it may be telling that even in these cases, the approach via test polynomial seems to be the most successful so far. In the case of matrix multiplication, this corresponds to border rank (see Section~\ref{sec:border}), which has been successfully used for upper bounds as well as lower bounds. In the case of matrix rigidity, see, \eg, \cite{lokamRigidity, landsbergRigidity}.
\end{rmk}

Second, we can use the complexity of the corresponding verification problems as a heuristic guide to the mathematical difficulty of the associated proofs. For starters, given a test polynomial $t$, it is easy to evaluate $t(f)$ for any explicitly given $f$. 
\begin{enumerate}
\item Verifying that $t$ vanishes on $\mathcal{C}_n$ is essentially a $\cc{coRP}$ problem. If $\mathcal{C}_n$ is the image of a simple algebraic map $\varphi$ from some $\F^N$, as most complexity classes are (see Section~\ref{sec:constructible}), we can generate random points of $\mathcal{C}_n$ by choosing random points in $\F^N$ and applying $\varphi$. In all situations we are aware of $N \leq poly(n)$.

\item Verifying that $f_{hard,n} \in \mathcal{B}_n$ requires verifying that $t_i(f_{hard,n}) = 0$ for a defining set of test polynomials $T$. We argued above that in most situations, $T$ must consist of exponentially many test polynomials.

\item Even if $\mathcal{D}_n$ is chosen to be defined by only $poly(n)$ test polynomials $t_1, \dotsc, t_{poly(n)}$---thus avoiding the difficulty of (2)---verifying that $\mathcal{D}_n$ is disjoint from $\mathcal{C}_n = \im(\varphi_n)$ reduces to deciding whether a variety given by equations is empty or not. Namely, the equations $t_i(\varphi(\var{x}))$ for $1 \leq i \leq k$, define the closed set $\varphi^{-1}(\mathcal{D}_n)$, which is empty if and only if $\mathcal{D}_n$ is disjoint from $\mathcal{C}_n$. 

Deciding whether a closed set given by equations is empty or not is the computational problem of Hilbert's Nullstellensatz ($\mathsf{HN}$), which is $\cc{NP}$-hard in general. As the $\varphi$ are quite simple, if we treat the defining equations $t_1, \dotsc, t_{poly(n)}$ as the input to our verification problem, the verification problem here is likely to be as hard as the general case of $\mathsf{HN}$. 
\end{enumerate}

Also note that the fewer test polynomials that are needed to define $\mathcal{D}_n$, the larger its dimension is, and hence the less likely it is to be disjoint from $\mathcal{C}_n$. This makes it seem unlikely that one could in fact find a $\mathcal{D}_n$ described by few test polynomials that is disjoint from $\mathcal{C}_n$ and contains $f_{hard,n}$, let alone that the corresponding instance of $\mathsf{HN}$ would not be a hard instance. Either way, we find the following complexities of the corresponding general verification problems very suggestive:

\begin{enumerate}
\item $\cc{coRP}$.

\item At least exponential time, as there are at least this many defining equations for $\mathcal{B}_n$.

\item Likely $\cc{NP}$-hard.
\end{enumerate}

Finally, in the absence of a brilliant insight to construct a $\mathcal{D}_n$ that has exponential dimension and yet is both disjoint from $\mathcal{C}_n$ and avoids the difficulty of $\mathsf{HN}$, the easiness of verification in (1) suggests that a relatively \emph{feasible computational approach} is possible using a brute force search for test modules, whereas this is not the case for approaches (2) and (3).

\section{Discussion of terminology} \label{app:terminology}
The new terminology we introduced in this paper was far from arbitrary; here we explain our reasons for choosing the terminology we did. A test $\GL_n$-module is, in particular, a representation of $\GL_n$. Indeed, the word ``module'' is often used interchangeably with ``representation'' in representation theory. In our setting, it has the additional connotation of a ``module of tests'' in the sense of computer programming. We believe the phrase ``test module'' is new.

Separating $\GL_n(\C)$-modules are essentially equivalent to the ``HWV obstructions'' of B\"{u}rgisser and Ikenmeyer \cite{BI}. In particular, the smallest $\GL_n(\C)$-module containing an HWV obstruction is a separating module, and every separating $\GL_n(\C)$-module contains some HWV obstruction (see \cite[Prop.~3.3]{BI}). We use our terminology as it generalizes (see Section~\ref{sec:general}) to other groups for which the highest weight theory does not apply, and we believe it is simpler to understand for expository purposes---in particular, it does not require knowing anything about Lie theory and the theory of highest weights. However, for certain approaches to certain lower bounds there are technical advantages to considering the highest weight vectors directly, as in \cite{BI}.

\section{Standard notation in the literature} \label{app:notation}
Rather than $\Poly^d(x_1, \dotsc, x_n)$, it is standard to see one of $\Sym^d(\C^{n})$, $\Sym^d(\C^{n*})$, $S^d(\C^n)$, or $S^d(\C^{n*})$, or even $\Sym^d(V)$ or $\Sym^d(V^*)$, or any other combination of these notations. The use of $\C^{n*}$  or $V^*$ here comes from a viewpoint in which the variables $x_i$ are viewed as the coordinate functions on an $n$-dimensional vector space $V = \C^n$, hence are elements of its (linear) dual vector space $V^* = \C^{n*}$. Sometimes the dual is dropped because it does not affect many statements. The use of $\Sym^d$ or $S^d$ is to denote the ``symmetric product'' to distinguish it from, say, the tensor product (which corresponds to noncommutative polynomials) or the wedge product (which corresponds to anti-commutative tensors, for which $x_i x_j = - x_j x_i$). 

The space of test polynomials of degree $D$ is then denoted $\Sym^D(\Sym^d(\C^n))$ (or variations similar to the above). Continuing with the viewpoint above, the coefficients $a_e$ of a polynomial $f \in \Sym^d(\C^{n*})$ are viewed as linear functions on the space of input polynomials, hence as elements of the dual vector space $\Sym^d(\C^n)$. Polynomials in the $a_e$ then live in the $D$-th symmetric power, as before.

The entire space of test polynomials is sometimes denoted $\C[\Sym^d(\C^{n*})]$ or $\mathcal{O}(\Sym^d(\C^{n*}))$; these are standard notations in algebraic geometry for the coordinate ring of the linear algebraic variety $\Sym^d(\C^n)$.

A $\GL_n$-module of type $\lambda$ is typically referred as as a Weyl module, which has several more-or-less standard notations: $V_{\lambda}$, $V_{\lambda}(\GL_n)$, $\mathbb{S}_{\lambda}(V)$ when the group is $\GL(V)$ (``$\mathbb{S}$'' for ``Schur functor''), or $\{ \lambda \}$. 

An $S_n$-module of type $\lambda$ is typically referred to as a Specht module, which also has several more-or-less standard notations, including $S_{\lambda}$ and $[\lambda]$.

In both the above cases, $\lambda$ typically refers to a partition, as the irreducible modules of $\GL_n(\C)$ are in bijective correspondence with partitions with at most $n$ parts, and the irreducible modules of $S_n$ over $\C$ are in bijective correspondence with partitions of the number $n$.

\bibliographystyle{ams-alpha}
\bibliography{gct-unity}

\end{document}